\documentclass[12pt, draftclsnofoot, onecolumn]{IEEEtran}
\usepackage{algorithm,algorithmic,amsbsy,amsmath,amssymb,epsfig,mathrsfs,url,color,cite}
\usepackage{subcaption} 
\usepackage[letterpaper,margin=1.95cm,footskip=0.9cm]{geometry}

\newcommand{\beq}{\begin{equation}}
\newcommand{\eeq}{\end{equation}}
\newcommand{\beqn}{\begin{eqnarray}}
\newcommand{\eeqn}{\end{eqnarray}}

\newtheorem{theorem}{\textbf{Theorem}}
\newtheorem{lemma}{\textbf{Lemma}}
\newtheorem{proposition}{\textbf{Proposition}}
\newtheorem{corollary}{\textbf{Corollary}}

\newenvironment{proof}[1][Proof]{\begin{trivlist}
\item[\hskip \labelsep {\bfseries #1}]}{\end{trivlist}}

\newcommand{\qed}{\nobreak \ifvmode \relax \else
      \ifdim\lastskip<1.5em \hskip-\lastskip
      \hskip1.5em plus0em minus0.5em \fi \nobreak
      \vrule height0.55em width0.5em depth0.25em\fi}

\usepackage{geometry}
\usepackage{enumerate} 
 \geometry{margin=1in} 

\usepackage{fancyhdr} 
\pagestyle{fancy} 
\lhead{}\chead{}\rhead{}
\lfoot{}\cfoot{\thepage}\rfoot{}


\usepackage[labelformat=simple]{subcaption}

\long\def\symbolfootnote[#1]#2{\begingroup%
\def\thefootnote{\fnsymbol{footnote}}\footnote[#1]{#2}\endgroup}

\newcount\colveccount
\newcommand*\colvec[1]{
        \global\colveccount#1
        \begin{pmatrix}
        \colvecnext
}
\def\colvecnext#1{
        #1
        \global\advance\colveccount-1
        \ifnum\colveccount>0
                \\
                \expandafter\colvecnext
        \else
                \end{pmatrix}
        \fi
}




\begin{document}
      
\title{Unified Stochastic Geometry Model for MIMO Cellular Networks with Retransmissions }
\author{\IEEEauthorblockN{Laila~Hesham~Afify, Hesham ElSawy, Tareq Y. Al-Naffouri, and Mohamed-Slim Alouini\\} 
\vspace{0.3cm}
\IEEEauthorblockA{ 
King Abdullah University of Science and Technology (KAUST) \\
Email: \{laila.afify, hesham.elsawy, tareq.alnaffouri, slim.alouini\}@kaust.edu.sa}} 

\maketitle

\begin{abstract}

This paper presents a unified mathematical paradigm, based on stochastic geometry, for downlink cellular networks with multiple-input-multiple-output (MIMO) base stations (BSs). The developed paradigm accounts for signal retransmission upon decoding errors, in which the temporal correlation among the signal-to-interference-plus-noise-ratio (SINR) of the original and retransmitted signals is captured. In addition to modeling the effect of retransmission on the network performance, the developed mathematical model presents twofold analysis unification for MIMO cellular networks literature. First, it integrates the tangible decoding error probability and the abstracted (i.e., modulation scheme and receiver type agnostic) outage probability analysis, which are largely disjoint in the literature. Second, it unifies the analysis for different MIMO configurations. The unified MIMO analysis is achieved by abstracting unnecessary information conveyed within the interfering signals by Gaussian signaling approximation along with an equivalent SISO representation for the per-data stream SINR in MIMO cellular networks. We show that the proposed unification simplifies the analysis without sacrificing the model accuracy. To this end, we discuss the diversity-multiplexing tradeoff imposed by different MIMO schemes and shed light on the diversity loss due to the temporal correlation among the SINRs of the original and retransmitted signals. Finally, several design insights are highlighted. 

\end{abstract}

\begin{keywords}
  MIMO cellular networks, error probability, outage probability, ergodic rate, stochastic geometry, network design. 
\end{keywords}

\section{Introduction}

 Multiple-input-multiple-output (MIMO)  transmission offers diverse options for antenna configurations that can lead to different diversity and multiplexing tradeoffs, which can be exploited to improve several aspects in wireless networks performance. For instance, link capacity gains can be harvested by multiplexing several data streams into the same channel via MIMO spatial multiplexing. Enhanced link reliability can be obtained by transmit and/or receive diversity. The network capacity can be improved by accommodating more users equipment (UEs) per channel via multi-user MIMO techniques. Last but not least, enhanced interference management can be achieved via beamforming or interference alignment techniques such that dominant interference sources are suppressed at the receivers side, and hence, the signal-to-interference-plus-noise ratio (SINR) is improved. 


Motivated by its potential gains, MIMO is considered an essential ingredient in modern cellular networks and 3GPP standards to cope with the ever-growing capacity demands. However, the MIMO operation is understood and its associated gains are quantified for elementary network {settings~\cite{paulraj_book, mimo_comm_book, larsson_book}}, which do not directly generalize to cellular networks. The operation of large-scale cellular network is highly affected by inter-cell interference, which emerges from spatial frequency reuse. Therefore, to characterize MIMO operation and quantify its potential gains in cellular network, the impact of per-base station (BS) precoding on the aggregate interference as well as the effect of the aggregate interference on the received SINR after MIMO post-processing should be characterized. {Exploiting recent advances in stochastic geometry analysis, several mathematical frameworks are developed to study MIMO operation in cellular networks~\cite{ STBC_harpret, ralph2, Int_align_tony, Massive_harpret, MIMO_heath ,CoMP_sakr, CoMP_martin, coordination1_martin, Spatiotemporal,  ralph_coop, Mimo_ordering,Mimo_load_harpreet, eid_Mimo, error_per2, asymptotic_SE}. Stochastic geometry does not only provide systematic and tractable framework to model MIMO operation in interference environments, it  also captures the behavior of realistic cellular networks as reported in~\cite{tractable_app, martin_ppp, survey_h}.} The authors in \cite{STBC_harpret} study the SINR coverage probability of orthogonal space-time block codes (OSTBC). Studies for the outage probability and ergodic rate for space-division-mutiple-access (SDMA) MIMO, also known as multi-user MIMO, are available in \cite{Mimo_ordering,Mimo_load_harpreet}. Coverage probability improvement via maximum-ratio-combining (MRC) with spatial interference correlation is quantified in \cite{ralph2}. The potential gains of beamforming and interference alignment in terms of SINR coverage and network throughput are quantified in \cite{Int_align_tony, Massive_harpret, MIMO_heath}.  
 Network MIMO via BS cooperation performance in terms of outage probability and ergodic rate are studied in \cite{CoMP_sakr, CoMP_martin, coordination1_martin, Spatiotemporal, ralph_coop}. Average symbol error probability (ASEP) and average pairwise error probability (APEP) for several MIMO configurations are studied in \cite{eid_Mimo, error_per2}. Asymptotic analysis for minimum mean square error MIMO receivers is conducted in \cite{asymptotic_SE}. 


Despite that the mathematical models presented in \cite{ralph2, Spatiotemporal, STBC_harpret, Int_align_tony, Massive_harpret, MIMO_heath ,CoMP_sakr, CoMP_martin, coordination1_martin, Mimo_ordering,Mimo_load_harpreet, eid_Mimo, error_per2, asymptotic_SE, ralph_coop} are all based on stochastic geometry, there are significant differences in terms of the analysis steps as well as the level of details provided by each model. The majority of the models focus on the outage probability and ergodic capacity for simplicity~\cite{ralph_coop, ralph2, STBC_harpret, Int_align_tony, Massive_harpret, MIMO_heath ,CoMP_sakr, CoMP_martin, coordination1_martin, Mimo_ordering,Mimo_load_harpreet, Spatiotemporal}. While both outage probability and ergodic rate are fundamental key performance indicators (KPIs) in wireless communication, they convey no information about the underlying modulation scheme, constellation size, or receiver type. Considering more tangible KPIs, such as decoding error probability and average throughput\footnote{Throughput is defined as the number of successfully transmitted bits per channel use.}, requires alternative and more involved analysis as shown in \cite{ eid_Mimo, error_per2}. {The decoding error probability analysis in \cite{ eid_Mimo, error_per2} is different and more involved when compared to the outage analysis in \cite{ralph_coop, ralph2, STBC_harpret, Int_align_tony, Massive_harpret, MIMO_heath ,CoMP_sakr, CoMP_martin, coordination1_martin, Mimo_ordering,Mimo_load_harpreet, Spatiotemporal}. The complexity of the error probability analysis can be attributed to the necessity to model the aggregate interference signal at the complex baseband level and statistically account for the transmitted symbols from each interferer~\cite{moe_win, moe_win_error}.} Furthermore, the analysis in each of the mathematical  models presented in \cite{ralph2, ralph_coop, STBC_harpret, Int_align_tony, Massive_harpret, MIMO_heath ,CoMP_sakr, CoMP_martin, coordination1_martin, Mimo_ordering,Mimo_load_harpreet, eid_Mimo, error_per2, Spatiotemporal} is highly dependent on the considered MIMO configuration. Having mathematical models that are significantly different in the analysis steps from one KPI to another and from one antenna configuration to another makes it challenging to conduct comprehensive studies that include different antenna configurations and compare their performances in terms of different KPIs. Furthermore, the effect of retransmission upon decoding errors on the network performance is only studied for cooperating single-antenna BSs~\cite{Spatiotemporal}, which is different from the case when both BSs and UEs are equipped with multiple antennas and precoding/combining is applied.  

 

This paper presents a unified mathematical paradigm, based on stochastic geometry, to study the average error probability, outage probability, and ergodic rate for cellular networks with different MIMO settings{\footnote{{This paper is an extension of the work reported in \cite{mimo_icc2016} to be presented in \emph{IEEE International Conference on Communications (ICC)}, 2016. }}. In addition to unifying the analysis for different KPIs and different MIMO configurations, the presented model accounts for the effect of data retransmission upon decoding errors, in which the temporal correlation between the original and retransmission SINRs is captured. The developed framework relies on abstracting the unnecessary information conveyed within the interfering symbols by Gaussian signals along with a unified equivalent SISO-SINR representation for the different MIMO schemes. The Gaussian signaling approximation simplifies the decoding error analysis without sacrificing the model accuracy and aligns the decoding error analysis with the outage probability analysis. The Gaussian signaling abstraction is proposed in \cite{Laila_letter} for a SISO cellular network, in which we generalize to MIMO cellular networks in this paper.

{It is worth noting that the SISO-SINR representation used in this paper was used to study OSTBC in \cite{STBC_harpret, andrews_adhoc, effect_coch_SDT}, receive  diversity  in \cite{ralph2}, and MU-MIMO in \cite{Mimo_ordering}.} However, the analysis in \cite{ralph2, Mimo_ordering, STBC_harpret,andrews_adhoc, effect_coch_SDT} is confined to outage probability and ergodic rate only. Also, the OSTBC studies in \cite{STBC_harpret,andrews_adhoc, effect_coch_SDT} used the SISO-SINR model as an approximation for the outage probblity characterization. In this paper, we extend the equivalent SISO-SINR model to evaluate decoding error probability, outage probability, ergodic rate, and throughput in MIMO cellular networks with transmit diversity (MISO), receive diversity (SIMO), OSTBC, MU-MIMO, and spatial multiplexing with zero-forcing receivers (ZF-Rx). Furthermore, we prove that the equivalent SISO-SINR model is \emph{exact} in the case of OSTBC MIMO scheme.  


 To the best of our knowledge, this paper is the first to present such unified study that can be used to investigate the diversity-multiplexing tradeoffs between different MIMO configurations in terms of different performance metrics and provide guidelines for MIMO cellular network design. Particularly, we propose an automated reliable strategy to determine which MIMO settings to deploy in order to achieve a desired network objective under performance constraints. The main contributions of the developed framework can be summarized in the following points:
\begin{itemize}
\item Developing a unified mathematical model that bridges the gap between error probability, outage probability, and ergodic rate analysis. Hence, it is possible to look at all three performance metrics within a single study.
\item Extending the SISO-SINR model to evaluate outage probability and ergodic rate for ZF-Rx MIMO, showing that the equivalent SISO-SINR is exact in the case of OSTBC MIMO (i.e., not an approximation as assumed in \cite{STBC_harpret,andrews_adhoc, effect_coch_SDT}), and using the equivalent SISO-SINR to evaluate decoding error probability for all considered MIMO schemes. Note that the decoding error probability model presented in \cite{eid_Mimo}  differs across the MIMO configurations due to the different effect of the precoding/combining matrices on the aggregate interference when accounting for modulation type and constellation size. 
\item Bypassing the complex baseband interference analysis, which simplifies the decoding error probability analysis and reduces the computational complexity of the final expressions,  when compared to \cite{eid_Mimo}, without compromising the model accuracy. 
\item Accounting for the signal retransmission upon decoding failure in a MIMO framework, in which the correlation among the original and retransmitted SINRs is captured. 
\item Revealing the cost of multiplexing, in terms of  outage probability and decoding error, in large-scale cellular networks. We also show the appropriate diversity compensation for such cost. 
\end{itemize}

\subsection{Organization \&  Notation}

{The paper is organized as follows. Section \ref{sec:system_model} presents the generic system model for a downlink cellular network deploying an arbitrary MIMO setup. In Section \ref{sec:gauss_apprx} the Gaussian signaling approximation and the equivalent SISO-SINR model are presented. The unified model with and without retransmission is presented in Section \ref{unified_ana}. Section \ref{sec:mimo_schemes} illustrates how to represent different MIMO schemes via the equivalent SISO-SINR model. The model validation, via Monte-Carlo simulations,  and the key findings of the paper are presented in Section \ref{sec:sim_results} and the paper is concluded in Section \ref{sec:conc}.} Throughout the paper, we use the following notations:  small-case bold-face letters $(\textbf{x})$ denote column vectors, upper-case bold-face letters $(\textbf{X})$ denote matrices,  $\left( \cdot\right)^T$ and $\left( \cdot\right)^H$ denote the transnpose and conjugate operators, respectively.
$\| \cdot \|$ is the Euclidean norm operator, $\mathbb{E}_x\left[ \cdot \right]$ and $\text{Var}_{x}\left[ \cdot \right]$ are the expectation and variance computed with respect to the random variable $x$, respectively, and $\text{erfc}(x)=\frac{2}{\sqrt{\pi}} \int_x^{\infty} e^{-t^2} \mathrm{d}t$ is the complementary error function.












\section{System Model}

\label{sec:system_model}
\subsection{Network and Propagation Models}

We consider a single-tier downlink cellular network, where the BSs locations are modeled by a homogeneous PPP $\Psi_B$ with intensity $\lambda_B$. UEs are distributed according to an independent homogeneous PPP $\Psi_u$ with intensity $\lambda_u$. BSs and UEs are equipped by $N_t$ and $N_r$ colocated antennas,\footnote{Colocated antennas is a common assumption in MIMO models based on stochastic geometry analysis to maintain the model tractability  \cite{ralph2, STBC_harpret, Int_align_tony, Massive_harpret, MIMO_heath, Mimo_ordering,Mimo_load_harpreet, eid_Mimo, error_per2, asymptotic_SE}.} respectively. Conditions on the relation between $N_r$ and $N_t$ depend on the MIMO setup under study, as will be shown later. Without loss of generality, we assume that $\Psi_B = \{r_o, r_1, r_2, \cdots \}$ contains the ascending ordered distances of the BSs from the origin (i.e., $r_o<r_1<r_2 $) and that the analysis is conducted on a test user located at the origin \cite{moe_win}. According to Slivinyak's Theorem, there is no loss of generality in this assumption \cite{martin_book}. Assuming nearest BS association, the test user is subject to interference from the BSs in $\Psi^o = \Psi_B \setminus r_o$, in which the distance between the test user and its serving BS has the probability distribution function (PDF) $f_{r_o}\left( r \right)= 2 \pi \lambda_B r e^{-\pi \lambda_B r^2}$, $r_o>0$. Let $p$ be the independent transmission probability for each BS in $\Psi^o$, then, the point process of the active interfering BSs $\tilde{\Psi}^o \subseteq  {\Psi}^o$ after independent thinning is also a PPP but with intensity $\lambda=p \lambda_B$\cite{martin_book}.  This assumption is used to reflect load awareness and/or frequency reuse as discussed in \cite{eid_Mimo, load_aware_harpreet}. Note that $p$ can be calculated as in \cite{on_cog}, and setting $p=1$ gives the traditional saturation condition (i.e., $\lambda_u \gg \lambda$) where all BSs are active.

A distance-dependent power-law path-loss attenuation is employed, in which the signal power attenuates at the rate $r^{-\eta}$ with the distance $r$, where $\eta > 2$ is the path-loss exponent. In addition to path-loss attenuation, we consider a Rayleigh multi-path fading environment between transmitting and receiving antennas. That is, the channel gain matrix from a transmitting BS to a generic UE, denoted as $\textbf{H} \in \mathbb{C}^{N_r\times N_t}$, has independent zero-mean unit variance complex Gaussian entries, such that  $\textbf{H} \sim \mathcal{CN}\left(0,\textbf{I} \right)$, where $\textbf{I} $ is the identity matrix. 


\subsection{Downlink MIMO Received Signal Model}
\label{sec:rxed_signal}

 For a general MIMO setup in Rayleigh fading environment, and considering arbitrary precoding/combining schemes, the complex baseband received signal vector is expressed as

\small
\begin{align}
\label{eq:mimo_rxed_signal1}
\textbf{y}&= \sqrt{\frac{P}{ r_o^{{\eta}}}} \textbf{W}_o  \textbf{H}_o \textbf{V}_o  \textbf{s} +\sum_{r_i \in \tilde{\Psi}^o  } \sqrt{\frac{P}{r_i^{{\eta}}}}  \textbf{W}_o \textbf{H}_i \textbf{V}_i \textbf{s}_i   +\textbf{W}_o  \textbf{n},
\end{align}
\normalsize

\noindent where {$P=\frac{E_s}{N_t}$} is the transmit power per antenna at the BSs such that $E_s$ is the energy per symbol, $\textbf{H}_{o} \in \mathbb{C}^{N_r \times N_t}$ is the useful channel matrix from the serving BS, and  $\textbf{H}_i  \in \mathbb{C}^{N_r \times N_t}$ is the interfering channel matrix from the $i^{th}$ interfering BS, $\textbf{H}_o$ and $\textbf{H}_i$, have i.i.d  $\mathcal{CN}\left( 0,1\right)$ entries, $\textbf{s} \in \mathbb{C}^{L \times 1}$ and $\textbf{s}_i  \in \mathbb{C}^{L \times 1}$ are, respectively, the intended and interfering symbols vector, where $L$ represents the number of multiplexed data streams\footnote{According to the employed MIMO setup, we might need to introduce a slight abuse of notation to preserve the convention used in \eqref{eq:mimo_rxed_signal1}. For instance, in multi-user MIMO setup with $\mathcal{K}$ single-antenna users, the parameter $N_r$ should be replaced with $\mathcal{K}$ so that the signal model in \eqref{eq:mimo_rxed_signal1} remains valid.}. The symbols in  $\textbf{s}$  and $\textbf{s}_i$ are independently drawn from an equiprobable two dimensional unit-energy constellations. The matrices  $\textbf{V}_o  \in \mathbb{C}^{N_t \times L}$ and $\textbf{V}_i \in \mathbb{C}^{N_t \times L}$ are the intended and interfering precoding matrices at the intended BS and the $i^{th}$ interfering BS, respectively, while $\textbf{W}_o  \in \mathbb{C}^{L \times N_r}$ is the combining matrix at the test receiver. Note that  $\textbf{W}_o$,  $\textbf{V}_o$, and  $\textbf{V}_i$  are determined based on the employed MIMO scheme. $\textbf{n} \in \mathbb{C}^{N_r \times 1}$ is the zero-mean additive white Gaussian noise vector with covariance matrix $\mathcal{N}_o \textbf{I}_{N_r }$, where $\textbf{I}_{N_r }$ is the identity matrix of size $N_r$.  Last, we assume a per-symbol maximum likelihood (ML) receiver at the test UE to decode the symbols in $\textbf{s}$. 

\section{Gaussian Signaling Approximation \& Equivalent SISO  Representation}
\label{sec:gauss_apprx}

Assuming per-stream symbol-by-symbol ML receiver, the employed precoding, combining, and equalization techniques decouple symbols belonging to different streams (i.e., in case of multiplexing) and/or combine symbols belonging to the same stream (i.e., in case of diversity) at the decoder to allow disjoint and independent symbol detection across the multiplexed data streams. Hence, the precoding and combining matrices ($\textbf{W}_o$ and $\textbf{V}_o$) are tailored to $\textbf{H}_o$ such that the product $\textbf{W}_o \textbf{H}_o \textbf{V}_o$ gives the appropriately scaled identity matrix of size $L$.  Without loss of generality, let us focus on the decoding performance of a generic symbol in the $l^{th}$ stream, in which the instantaneous received signal after applying combining/precoding techniques is given by

\scriptsize
\begin{align}
\label{eq:mimo_rxed_signal}
{{y}}_{l}&= \sqrt{\frac{{P}}{ r_o^{{\eta}}}} \underset{g_o s_l}{\underbrace{\sum_{k=1}^L \overset{[0 \; 0 ... g_o ... 0 \; 0]}{\overbrace{\bar{\textbf{w}}^T_{o,l}  \textbf{H}_o {\textbf{v}}_{o,k}}} s_{k}}} +\sum_{r_i \in \tilde{\Psi}^o  } \underset{I_i}{\underbrace{ \sqrt{\frac{P}{ r_i^{{\eta}}}} \sum_{k=1}^L \overset{{a}^{(i)}_{l,k}}{\overbrace{\bar{\textbf{w}}^T_{o,l}  \textbf{H}_i {\textbf{v}}_{i,k}}}  s_{i,k}}} +   \bar{\textbf{w}}^T_{o,l}   \textbf{n},
\end{align}
\normalsize

\noindent such that $l \in \left\{1, \cdots, L\right\}$, $\bar{\textbf{w}}_{o,l}$ is the $l^{th}$ column of matrix $\bar{\textbf{W}}_o={\textbf{W}_o^T}$, and ${\textbf{v}}_{i,k}$ is the $k^{th}$ column of matrix $\textbf{V}_i$. Further, $g_o$ is a real random scaling factor that appears in the intended signal due to the equalization applied for detecting the desired symbol, $a^{(i)}_{l,k}$, $\forall k$ are the complex random coefficients combining the interfering symbols from the antennas of the $i^{th}$ interfering BS. As shown in \eqref{eq:mimo_rxed_signal}, the coefficients $a^{(i)}_{l,k}$ are generated from the product $\bar{\textbf{w}}^T_{o,l} \textbf{H}_i \textbf{v}_{i,k}$, which capture per-BS precoding and test receiver combining effects on the aggregate interference. Since $\bar{\textbf{W}}_{o} $ and $\textbf{V}_{i} $ are designed independently from each other and from $\textbf{H}_i $, 
the coefficients $a^{(i)}_{l,k}$ are not controllable and depend on  the random values in $\textbf{H}_o $, $\textbf{H}_i $, and the channel matrix between the $i^{th}$ interfering BS and its associated users (denoted as $\tilde{\textbf{H}}_i $). 

It is clear, from \eqref{eq:mimo_rxed_signal}, that the aggregate interference seen at the decoder of the test UE is highly affected by the per-interfering BS precoding scheme ($\textbf{v}_{i,k}$), the number of streams transmitted by each interfering BS ($L$), the per-stream transmitted symbol ($s_{i,k}$), and the employed combining technique ($\bar{\textbf{w}}_{o,l}$) at the test UE. Therefore, characterizing the aggregate interference in \eqref{eq:mimo_rxed_signal} is essential to characterize and quantify MIMO operation in cellular networks. The aggregate interference term contains three main sources of randomness, namely, the network geometry, the channel gains\footnote{The precoding and combining matrices are functions of the channel gains.}, and the interfering symbols. Treating interference as noise, the average decoding performance of a symbol conveyed in a signal in the form of \eqref{eq:mimo_rxed_signal} is characterized through the following two steps:
\begin{enumerate}[(i)]
\item \emph{conditionally} averaging over the transmitted symbols and noise (while {conditioning} on the network geometry and channel gains),
\item then averaging over the channel gains and network geometry,
\end{enumerate}
The averaging in step (i) is done based on the modulation scheme, constellation size, and receiver type. Then, in step (ii), the averaging is done using stochastic geometry analysis.


 The average decoding error performance in step (i) is only characterized for certain distributions of additive noise channels (e.g., Gaussian~\cite{slim_comm_book}, Laplacian~\cite{Laplace1,Laplace2}, and Generalized Gaussian~\cite{Generalized_1}), in which the additive white Gaussian noise (AWGN) channel represents the simplest case. Accounting for the exact distribution of the interfering symbols, the interference-plus-noise distribution does not directly fit into any of the distributions where the average decoding error performance is known. Hence, the averaging step (i) cannot be directly conducted unless the interference-plus-noise term is expressed or approximated via one of the distributions where the average decoding error performance is known. The authors in \cite{eid_Mimo} proposed the {\em equivalent-in-distribution} (EiD) approach where an exact conditional Gaussian representation for the aggregate interference in  \eqref{eq:mimo_rxed_signal} is  achieved. Hence, the conditional error probability analysis (i.e., step (i)) is conducted via error probability expressions for AWGN channels, followed by the deconditioning step in (ii). The main drawback of the EiD approach is that it requires characterizing the interference signals at the baseband level to achieve the conditional Gaussian representation, which complicated both averaging steps (i), (ii) specially in MIMO networks. Furthermore, the EiD approach for the error probability analysis in \cite{eid_Mimo} is disjoint from the outage probability and ergodic rate analysis in \cite{STBC_harpret, Int_align_tony, Massive_harpret, MIMO_heath ,CoMP_sakr, CoMP_martin, coordination1_martin, Mimo_ordering,Mimo_load_harpreet, Spatiotemporal}.

To facilitate the error probability analysis and achieve a unified error probability, outage probability and ergodic rate analysis, we only account for the entries in the intended symbol vector $\textbf{s}$ of \eqref{eq:mimo_rxed_signal1} and abstract the entries in $\textbf{s}_i$ by i.i.d. zero-mean Gaussian signals $\tilde{\textbf{s}}_i$  with unit-variance. Such abstraction ignores the unnecessary and usually unavailable information of the interfering signals. Assuming Gaussian signaling for the interfering symbols (i.e., the entries in $\tilde{\textbf{s}}_i$ are Gaussian), \eqref{eq:mimo_rxed_signal} can be rewritten as
\begin{align}\label{eq:mimo_rxed_signal2}
{{y}}_{l}&= \sqrt{\frac{P}{r_o^{{\eta}}}}\:  {g_o}  s_{l} +\sum_{r_i \in \tilde{\Psi}^o  } \underset{I_{i}}{ \underbrace{\sqrt{\frac{P}{r_i^{{\eta}}}}\sum_{k=1}^L a^{(i)}_{l, k} \tilde{s}_ {i,k}}}  + \bar{\textbf{w}}^T_{o,l}   \textbf{n}.
\end{align}

\noindent Conditioned on $r_i$ and $a_{l,k}^{(i)}$ $\forall \{i,k,l\}$, the lumped interference-plus-noise term in \eqref{eq:mimo_rxed_signal2} is Gaussian because of the Gaussianity of $\tilde{s}_{i,k}$, $\forall{i,k}$. This renders the well-known AWGN error probability expressions legitimate to conduct the averaging in step (i), in which the noise variance used in the AWGN-based expressions is replaced by the variance of the lumped interference and noise terms in \eqref{eq:mimo_rxed_signal2}. That is, the decoding error can be studied by using the AWGN expressions with the conditional SINR (i.e., conditioned on the channel gains and network geometry), followed by the averaging step (ii). The Gaussian signaling approximation leads to the following proposition.

\begin{proposition} \label{prop}
Consider a downlink MIMO cellular network with $N_t$ antennas at each BS and $N_r$ antennas at each UE in a Rayleigh fading environment with i.i.d. unit-mean channel power gains and Gaussian signaling approximation for the interfering symbols, then the per-data stream conditional SINR at the decoder of a generic UE after combining/equalization can be represented via the following equivalent SISO-SINR 
\begin{align} \label{instant}
\Upsilon =\frac{P r_o^{-\eta} \tilde{g}_o}{\sum_{r_i \in \tilde{\Psi}^o} P r_i^{-\eta} \tilde{g}_i +\mathcal{N}_o},
\end{align} 
\noindent where the random variables $\tilde{g}_o \sim \text{Gamma}\left(m_o,1 \right)$ and $\tilde{g}_i \sim \text{Gamma}\left(m_i,1 \right)$ capture the effect of MIMO precoding, combining, and equalization. The values of $m_o$ and $m_i$ are determined based on the number of antennas ($N_t$ and $N_r$), the number of multiplexed data streams per BS $(L)$ and the employed MIMO configuration as shown in Table~\ref{tab:summary}.

\end{proposition}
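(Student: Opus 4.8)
The plan is to derive the per-stream conditional SINR directly from the Gaussian-signaling representation in \eqref{eq:mimo_rxed_signal2} and then identify the laws of the resulting effective gains scheme by scheme. First I would fix the network geometry $\tilde{\Psi}^o$ and all channel realizations, so that $g_o$, the coefficients $a^{(i)}_{l,k}$, and the combiner $\bar{\textbf{w}}_{o,l}$ are all deterministic. Conditioned on these, the interference-plus-noise term in \eqref{eq:mimo_rxed_signal2} is zero-mean complex Gaussian, since the $\tilde{s}_{i,k}$ are i.i.d.\ zero-mean unit-variance and $\textbf{n}$ is AWGN. Reading off the useful-signal power $P r_o^{-\eta} g_o^2$ (recall $g_o$ is real and the symbols are unit-energy), the per-interferer power $P r_i^{-\eta}\sum_{k=1}^{L}|a^{(i)}_{l,k}|^2$, and the noise power $\mathcal{N}_o\|\bar{\textbf{w}}_{o,l}\|^2$, then dividing numerator and denominator by $\|\bar{\textbf{w}}_{o,l}\|^2$, produces exactly the ratio in \eqref{instant} under the identifications $\tilde{g}_o = g_o^2/\|\bar{\textbf{w}}_{o,l}\|^2$ and $\tilde{g}_i = \sum_{k=1}^{L}|a^{(i)}_{l,k}|^2/\|\bar{\textbf{w}}_{o,l}\|^2$. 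This establishes the \emph{form} of \eqref{instant}; the substance is the Gamma law for $\tilde{g}_o$ and $\tilde{g}_i$.

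Next I would settle the interference-gain law, which is uniform across configurations. Because $\bar{\textbf{w}}_{o,l}$ and the interfering precoder columns $\textbf{v}_{i,k}$ are designed from $\textbf{H}_o$ and $\tilde{\textbf{H}}_i$, they are statistically independent of the isotropic channel $\textbf{H}_i\sim\mathcal{CN}(0,\textbf{I})$. Hence, conditionally, the row $\bar{\textbf{w}}_{o,l}^T\textbf{H}_i$ is complex Gaussian with i.i.d.\ entries of variance $\|\bar{\textbf{w}}_{o,l}\|^2$; projecting onto orthonormal columns $\textbf{v}_{i,k}$ makes the $a^{(i)}_{l,k}$ independent $\mathcal{CN}(0,\|\bar{\textbf{w}}_{o,l}\|^2)$ across $k$. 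Therefore $\tilde{g}_i$ is a normalized sum of $m_i$ unit-mean exponentials, i.e.\ $\text{Gamma}(m_i,1)$, with $m_i$ the number of interfering streams reaching the decoder. The cancellation of $\|\bar{\textbf{w}}_{o,l}\|^2$ here is the key reason the interference statistics are insensitive to the desired-link combining weights.

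The remaining work is the desired-gain law $\tilde{g}_o\sim\text{Gamma}(m_o,1)$, carried out per configuration. For receive/transmit diversity (SIMO/MISO) and OSTBC the equalizer collapses the useful term to a matched-filter gain, so $g_o^2/\|\bar{\textbf{w}}_{o,l}\|^2$ reduces to the squared norm of the relevant $\mathcal{CN}(0,\textbf{I})$ channel vector or to $\|\textbf{H}_o\|_F^2$, which are chi-square variates giving $m_o=N_r$, $N_t$, or $N_tN_r$ respectively; the MU-MIMO single-user gain is again such a norm. The delicate case is ZF-Rx, where $\textbf{W}_o$ is the pseudoinverse of $\textbf{H}_o$: then $g_o=1$ while $\|\bar{\textbf{w}}_{o,l}\|^2=[(\textbf{H}_o^H\textbf{H}_o)^{-1}]_{ll}$, so $\tilde{g}_o=1/[(\textbf{H}_o^H\textbf{H}_o)^{-1}]_{ll}$ is the reciprocal of an inverse-Gram diagonal.

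I expect the ZF distribution to be the main obstacle: unlike the diversity cases it is not an elementary Gaussian norm, and showing its marginal is $\text{Gamma}(N_r-L+1,1)$ requires the standard Wishart/partitioned-inverse (projection-onto-the-orthogonal-complement) characterization of the inverse-Wishart diagonal. Once that is in hand, every case follows by reading $m_o$ and $m_i$ off Table~\ref{tab:summary}. A secondary subtlety worth flagging is the orthonormality of $\{\textbf{v}_{i,k}\}$ and the implicit handling of $\|\bar{\textbf{w}}_{o,l}\|$: both must be checked (or absorbed) for each scheme so that the $\|\bar{\textbf{w}}_{o,l}\|^2$ cancellation and the clean integer shape parameters are legitimate.
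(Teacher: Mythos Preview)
Your approach is essentially the paper's: derive the SINR form from \eqref{eq:mimo_rxed_signal2} and then identify the Gamma parameters scheme by scheme; your normalization by $\|\bar{\textbf{w}}_{o,l}\|^2$ is a clean unified version of the case-dependent scaling factor $\epsilon$ the paper introduces in each lemma of Appendix~D.

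One point deserves more weight than you give it. Your unified interference argument hinges on the $\textbf{v}_{i,k}$ being orthonormal so that, conditionally, the $a^{(i)}_{l,k}$ are i.i.d.\ $\mathcal{CN}(0,\|\bar{\textbf{w}}_{o,l}\|^2)$. In SDMA the precoder columns are $\textbf{v}_{i,k}=\textbf{q}_k/\|\textbf{q}_k\|$ with $\textbf{Q}=\tilde{\textbf{H}}_i^H(\tilde{\textbf{H}}_i\tilde{\textbf{H}}_i^H)^{-1}$: these are unit-norm but \emph{not} orthogonal, so the $a^{(i)}_{l,k}$ are correlated and $\tilde{g}_i\sim\text{Gamma}(\mathcal{K},1)$ is only an approximation. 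This is exactly why Table~\ref{tab:summary} distinguishes ``Exact'' from ``Approx.''\ rows, and the paper states it explicitly in the proof sketch. You flag orthonormality as a ``secondary subtlety'' to be checked, but it is the substantive distinction the proposition is making; your write-up should acknowledge that for SDMA (and SM-MIMO) the Gamma law is adopted as a tractable approximation, not proved.

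Two smaller corrections. For OSTBC the space--time structure does not fit the column-precoding template $\textbf{V}_i=[\textbf{v}_{i,1},\ldots,\textbf{v}_{i,L}]$; the paper handles it through the effective channel $\textbf{H}_{i,\text{eff}}$ built from dispersion matrices, and the shape is $m_o=N_sN_r$ (active antennas $N_s\le N_t$), not $N_tN_r$. For SDMA you should also state the desired-gain law explicitly: $\tilde{g}_o=1/\|\textbf{q}_l\|^2\sim\text{Gamma}(N_t-\mathcal{K}+1,1)$, which like the ZF-Rx case comes from the inverse-Wishart diagonal, not from a plain Gaussian norm.
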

{
\begin{table}[h!]

\begin{center}
\begin{tabular}{|c| c| c| c| c | c|}
\hline 
 \textbf{MIMO Setup} & $L$ & $\boldsymbol{m_o}$ & $\boldsymbol{m_i}$ & Accuracy  & Proof \\  

\hline \hline 

 SIMO & $1$ & $N_r$ & $1$ & Exact & \emph{Lemma~\ref{lemma:SIMO} }  \\ 
\hline

  OSTBC & $N_s$ & $N_s N_r$ & $N_s$  & Exact & \emph{Lemma~\ref{lemma:OSTBC} } \\  
\hline 
  ZF-Rx & $N_t$ & $N_r-N_t+1$ & $N_t$ & Exact & \emph{Lemma~\ref{lemma:ZF_Rx} } \\  
\hline
 SDMA & $\mathcal{K}$ &  $N_t-\mathcal{K}+1$ & $\mathcal{K}$  & Approx. & \emph{Lemma~\ref{lemma:SDMA} }\\ 
\hline
MISO & $1$ &  $N_t$ & $1$  & Exact & \emph{Corollary~\ref{cor:MISO} }\\ 
\hline
SM-MIMO & $N_t$ & $N_r$ & $N_t$ & Approx. & \emph{Lemma~\ref{lemma:SM} }\\ 
\hline
\end{tabular}

\end{center}
\caption{SISO-equivalent gamma distribution parameters for various MIMO settings. }
\label{tab:summary}
\end{table}
}\normalsize
\begin{proof}
The detailed discussion and proof of each MIMO setup is given in Section~\ref{sec:mimo_schemes} in the corresponding lemma shown in Table~\ref{tab:summary}. Here, we just sketch a high-level proof of the proposition. The equivalent channel gains in \eqref{instant} are $\tilde{g}_o = \vert{g_o}\vert^2$ and $\tilde{g_i} = \vert{g_i}\vert^2=\sum_{k=1}^L\vert a^{(i)}_{l, k}\vert^2$, where ${g}_o$ is the random scale for the intended symbol due to precoding and combining/equalization as show in \eqref{eq:mimo_rxed_signal}. Since precoding and combining/equalization are usually in the form of linear combination of the channel power gains and that the channel gains have independent Gaussian distributions, both random variables $\tilde{g_o} $ and $\tilde{g_i}$ $\forall i$ are independent $\chi^2$-distributed with degrees of freedom equal to the number of linearly combined random variables, which depends on the number of antennas, precoding technique, and number of multiplexed data streams per BS. Note that the $\chi^2$ distribution for interfering channel gains $\tilde{g_i}$ is exact only if the precoding vectors in each BS are independent. In the case of dependent precoding vectors, the correlation is ignored and the $\chi^2$ distribution for $\tilde{g_i}$ is an approximation. Such approximation is commonly used in the literature for tractability~ \cite{eid_Mimo,  error_per2, STBC_harpret, Int_align_tony, Massive_harpret, MIMO_heath, CoMP_sakr, CoMP_martin, coordination1_martin, Mimo_ordering, Mimo_load_harpreet}, and is verified in Section~\ref{sec:sim_results}. 
Exploiting the one-to-one mapping between the  $\chi^2$ distribution and the gamma distribution, we follow the convention in  \cite{eid_Mimo,  error_per2, STBC_harpret, Int_align_tony, Massive_harpret, MIMO_heath, CoMP_sakr, CoMP_martin, coordination1_martin, Mimo_ordering, Mimo_load_harpreet} and use the gamma distribution, instead of the $\chi^2$ distribution, for $\tilde{g}_o$ and $\tilde{g}_i$. 
\end{proof}

 Proposition~\ref{prop} gives the equivalent SISO representation for the MIMO cellular network in which the effect of precoding, combining, and equalization of the employed MIMO scheme is abstracted by the random variables $\tilde{g_o}$ and $\tilde{g_i}$, $\forall i$ in \eqref{instant}. Hence, unified analysis and expressions for different KPIs and  MIMO configurations, respectively, are viable as shown in the next section.

\section{Unified Performance Analysis} \label{unified_ana}


Based on the Gaussian signaling approximation,  interference-plus-noise in \eqref{eq:mimo_rxed_signal2} is conditionally Gaussian. Hence, the decoding error performance of the MIMO scheme is studied by plugging the conditional SINR in \eqref{instant} with the appropriate channel gains (i.e., $\tilde{g}_o$ and $\tilde{g}_i$) in the corresponding AWGN-based decoding error expression, followed by an averaging over the channel gains and network geometry (i.e., step (ii)). Using the AWGN expression for the SEP for $M$-QAM modulation scheme  given in \cite{slim_comm_book}, the ASEP in MIMO cellular networks can be expressed as
\small
\begin{align} \label{assep}
\text{ASEP}\left(\Upsilon\right)= {w_{1} }  \mathbb{E}\left[\text{erfc}\left(\sqrt{\beta \Upsilon } \right)\right] + {w_{2} }  \mathbb{E}\left[\text{erfc}^2 \left(\sqrt{\beta \Upsilon} \right)\right],
\end{align}
\normalsize

\noindent   where $w_1= 2 \frac{\sqrt{M}-1}{\sqrt{M}}, w_2= -\left( \frac{\sqrt{M}-1}{\sqrt{M}}\right)^2$, and  $\beta= \frac{3}{2 \left( M-1\right)}$ are constellation-size specific constants\footnote{The error probability expression in \eqref{assep} can model other modulation schemes by just changing the modulation-specific parameters as shown in \cite{slim_comm_book}.}. The Gaussian signaling approximation is also the key that unifies the ASEP, outage probability, and ergodic rate analysis. This is because both outage and capacity are information theoretic KPIs that implicitly assume Gaussian codebooks, which directly lead to the conditional SINR in the form given by Proposition~\ref{prop}. Consequently, both the outage probability and ergodic capacity are also functions of the SINR in the form of \eqref{instant}, and are given by
\begin{align}\label{out_orig}
\mathcal{O} =  \mathbb{P}\left\{ \Upsilon <\theta \right\},
\end{align}
\vspace{-0.5cm}
and  
\begin{align}\label{rate_orig}
 \mathcal{R} =  \mathbb{E}\left[ \ln \left(1+ \Upsilon \right) \right].
\end{align}

According to step (ii) of the analysis given in Section~\ref{sec:gauss_apprx}, the expectations in \eqref{assep}, \eqref{out_orig}, and \eqref{rate_orig} are with respect to the network geometry and channel gains, which are evaluated via stochastic geometry analysis. Such expectations are usually expressed in terms of the Laplace transform (LT)\footnote{The LT of the interference is a short for the LT of the probability density function (PDF) of the interference, which is equivalent to the moment generating function but with negative argument.} of the aggregate interference power in \eqref{instant}, denoted as $\mathcal{I} =\sum_{r_i \in \tilde{\Psi}^o} P r_i^{-\eta} \tilde{g}_i $. The LT of the interference power in the SISO-equivalent SINR given in \eqref{instant} is characterized by the following lemma.
\begin{lemma} \label{the_LT}
Consider a cellular network with MIMO transmission scheme that can be represented via the equivalent SISO-SINR in \eqref{instant} and BSs modeled via a PPP with intensity $\lambda$, in which each BS transmits a symbol vector of length $L$ per channel use (pcu) with  symbols drawn from a zero-mean  unit-variance Gaussian distribution, then the LT of the interference power affecting an arbitrary symbol at a receiver located $r_o$ meters away from its serving BS is given by
\small
\begin{align}
 \!\!\!\!\!\!\!\!\! \mathcal{L}_{\mathcal{I} \vert r_o}\left(z\right)    = \exp{\left\{- \pi \lambda r_o^2 \left[ \left(_{2}F_{1}\left(\frac{-1}{b}, m_i; 1-\frac{1}{b}; -z P r_o^{-\eta} \right)-1 \right) \right] \right\}}, 
\end{align}
\normalsize

\noindent where ${_2}F_1(\cdot,\cdot;\cdot;\cdot)$ is the Gauss hypergeometric function \cite{abramowitz_stegun}. 

\begin{proof}
 Starting from the definition of the LT, we have
  \small
\begin{align}
 \!\!\!\!\!\!\!\!\! \mathcal{L}_{\mathcal{I} \vert r_o}\left(z\right)   &{=}  \mathbb{E} \left[\exp{\left\{-z \displaystyle \sum_{r_i \in \tilde{\Psi}^o} P r_i^{-\eta} \tilde{g}_i \right\} } \right] \overset{(b)}{=}\exp{\left\{ -2 \pi \lambda\int_{r_o}^{\infty}  \mathbb{E} \left[1- e^{-z P {x^{-\eta} \tilde{g}} }\right] x \mathrm{d}x\right\}} \notag \\
 &\overset{(c)}{=}\exp{\left\{ -2 \pi \lambda\int_{r_o}^{\infty}  \mathbb{E} \left[1- \frac{1}{\left( 1+z P x^{-\eta}\right)^{m_i}} \right] x \mathrm{d}x\right\}}, 
\end{align}
\normalsize
\noindent where $(b)$ follows from the PGFL of the PPP \cite{martin_book}, and $(c)$ follows from the LT of  the gamma distributed channel gains with shape parameter $m_i$ and unity scale parameter.
\end{proof}
\end{lemma}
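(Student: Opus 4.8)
The plan is to compute the Laplace functional of the interference field directly via the probability generating functional (PGFL) of the interfering PPP, deconditioning on the Gamma-distributed equivalent gains first and then reducing everything to a single distance integral. First I would write $\mathcal{L}_{\mathcal{I}\vert r_o}(z)=\mathbb{E}\big[\exp(-z\sum_{r_i\in\tilde\Psi^o}Pr_i^{-\eta}\tilde g_i)\big]$ and, using the independence of the marks $\tilde g_i$ from the point process together with the PGFL of a PPP of intensity $\lambda$ restricted to the exterior of the disk of radius $r_o$ (nearest-BS association forces every interferer to lie farther than $r_o$), turn the product over points into $\exp\{-2\pi\lambda\int_{r_o}^\infty(1-\mathbb{E}_{\tilde g}[e^{-zPx^{-\eta}\tilde g}])\,x\,\mathrm{d}x\}$. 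Since $\tilde g_i\sim\text{Gamma}(m_i,1)$ has Laplace transform $(1+zPx^{-\eta})^{-m_i}$, this reproduces the steps $(b)$ and $(c)$ already displayed and reduces the whole lemma to evaluating the single integral $J=\int_{r_o}^\infty(1-(1+zPx^{-\eta})^{-m_i})\,x\,\mathrm{d}x$.

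The core of the argument is putting $J$ into closed form. I would substitute $u=x^2$ and then rescale $u=r_o^2 v$, which (with the notation $b=\eta/2$ and the shorthand $a=zPr_o^{-\eta}$ of the statement) turns the integral into $J=\tfrac{r_o^2}{2}\int_1^\infty\big(1-(1+a v^{-b})^{-m_i}\big)\,\mathrm{d}v$, extracting the $r_o^2$ prefactor the formula demands. Expanding by the generalized binomial series, $1-(1+a v^{-b})^{-m_i}=-\sum_{n\ge1}\binom{-m_i}{n}a^n v^{-bn}$, and integrating term by term gives $\int_1^\infty v^{-bn}\,\mathrm{d}v=\tfrac{1}{bn-1}$, which converges for every $n\ge1$ precisely because $\eta>2$ forces $b>1$. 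Collecting the terms produces the power series $\sum_{n\ge1}\tfrac{(m_i)_n}{n!}\tfrac{(-a)^n}{1-bn}$, and the final step is to recognize this as ${}_2F_1(-\tfrac1b,m_i;1-\tfrac1b;-a)-1$ through the Pochhammer telescoping identity $\tfrac{(-1/b)_n}{(1-1/b)_n}=\tfrac{1}{1-bn}$. Substituting $J$ back into $\exp(-2\pi\lambda J)$ then yields the stated exponential-of-hypergeometric form.

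I expect the main obstacle to be the rigorous justification of the term-by-term integration and the identification with ${}_2F_1$, not the stochastic-geometry bookkeeping. The binomial expansion converges only for $\vert a v^{-b}\vert<1$, so for $a\ge1$ the interchange of sum and integral needs care: I would handle it either by splitting the range of $v$ at $a^{1/b}$, or by arguing analytic continuation in $z$ from a neighborhood of the origin, where both sides are analytic. A cleaner alternative that sidesteps convergence entirely is to integrate $J$ by parts after the substitution $w=v^{-b}$, reducing it to an Euler-type integral $\int_0^1 w^{-1/b}(1+aw)^{-m_i-1}\,\mathrm{d}w$ and invoking the Euler integral representation of ${}_2F_1$ with a contiguous-relation identity; nonetheless I would keep the series route as the primary argument because the Pochhammer cancellation makes the target parameters $(-\tfrac1b,m_i;1-\tfrac1b)$ emerge immediately. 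I would close by recording the admissibility conditions $\eta>2$ and $m_i\ge1$ that guarantee both the convergence of $J$ and the validity of the representation.
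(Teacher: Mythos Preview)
Your proposal is correct and follows exactly the same route as the paper's proof: definition of the LT, PGFL of the PPP restricted to $\{x>r_o\}$, then the moment generating function of the $\text{Gamma}(m_i,1)$ marks. The paper actually stops at your step $(c)$ and leaves the reduction of the remaining radial integral to ${}_2F_1(-1/b,m_i;1-1/b;-zPr_o^{-\eta})-1$ implicit, so your series/Pochhammer computation (and the alternative Euler-integral route) simply fills in a detail the authors omitted rather than constituting a different argument.
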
 
Using Proposition~\ref{prop} and Lemma~\ref{the_LT}, we arrive to the unified MIMO expressions for the ASEP, outage probability, and ergodic rate in the following theorem.
\normalsize
\begin{figure*}[t]
\footnotesize{
\begin{align}
\!\!\! \!\!\! \text{ASEP}& \approx w_1 \left[ 1- \frac{\Gamma\left(m_o+\frac{1}{2}\right)}{\Gamma \left(m_o \right)} \frac{2}{\pi} \int_{0}^{\infty} \int_{0}^{\infty} 2 \pi \lambda_B x e^{- \pi \lambda_B x^{2}}\frac{1}{\sqrt{z}} e^{-z \left(1+  \frac{ m_o \mathcal{N}_o   x^{\eta}}{ \beta P} \right)} \:  _{1}F_{1} \left(1-m_o; \frac{3}{2}; z \right) \cdot \mathcal{L}_{\mathcal{I} \vert x}\left( \frac{m_o z x^{\eta}}{ \beta P } \right) \mathrm{d}x \mathrm{d}z\right]  \notag \\ 
&+ w_2 \left[ 1- \frac{4 m_o}{\pi} \int_{0}^{\infty} \int_{0}^{\infty} 2 \pi \lambda_B x e^{- \pi \lambda_B x^{2}} e^{-z \left(  \frac{ m_o \mathcal{N}_o  x^{\eta}}{ \beta P}\right)} \:  \int_0^{\frac{\pi}{4}} {_{1}F_{1}}\left(m_o+1; 2 ;\frac{-z}{\text{sin}^2 \vartheta} \right)  \frac{1}{\text{sin}^2 \vartheta}\cdot \mathcal{L}_{\mathcal{I} \vert x}\left( \frac{m_o z x^{\eta}}{ \beta P } \right)  \mathrm{d}\vartheta \mathrm{d}x \mathrm{d}z\right] . 
\label{eq:asep_general}
\end{align}
} \hrulefill
\end{figure*}

\begin{theorem} \emph{Unified Analysis:} Consider a cellular network with MIMO transmission scheme that can be represented via the equivalent SISO-SINR in \eqref{instant} and BSs modeled via a PPP with intensity $\lambda$, in which each BS transmits a symbol vector of length $L$ pcu with symbols drawn from an equiprobable unit-power square quadrature amplitude modulation ($M$-QAM) scheme, then the ASEP for an arbitrary symbol is approximated by \eqref{eq:asep_general}, where ${_1}F_1(\cdot;\cdot;\cdot)$ is the Kummer confluent hypergeometric function \cite{abramowitz_stegun}. 

For an interference-limited scenario, the probability that the SIR for an arbitrary symbol goes below a threshold $\theta$ is given by
\small{
\begin{align} \label{out}
\mathcal{O}\left( \theta\right) &=   1- \!\int_{0}^{\infty}\!\!\!\!\!\! 2 \pi \lambda_B \: x e^{-\pi \lambda_B x^2}   \sum_{j=0}^{m_o-1}  \frac{\left(-1\right)^j}{j !} \frac{\mathrm{d}^j}{\mathrm{d}z^j} \left(\frac{\theta  r_o^{\eta}}{ P } \right)^{j}  \mathcal{L}_{\mathcal{I} \vert x}\left(z\right) \Bigg \vert_{z= \frac{ \theta  x^{\eta}}{ P}} \!\!\!\!\!\!\! \!\!\!\!\mathrm{d}x,
\end{align}
} \normalsize
\noindent and the ergodic rate for an arbitrary data stream is given by
\small{
\begin{align} \label{rate}
\!\!\!\!\!\mathcal{R}= \int_0^{\infty} \!\!\!\! \int_0^{\infty}\!\!\! 2 \pi \lambda_B \:x e^{- \pi \lambda_B x^2} \mathcal{L}_{\mathcal{I} \vert x}\left( z\right)\left(\frac{1-\left( 1+  z\right)^{-m_o}}{z}\right) \mathrm{d}z \mathrm{d}x,
\end{align}
} \normalsize

\noindent where $ \mathcal{L}_{\mathcal{I} \vert x}\left(z\right)$ in \eqref{eq:asep_general}, \eqref{out}, and \eqref{rate} is the LT given in Lemma~\ref{the_LT} when replacing $r_o$ with $x$. 

\begin{proof}
See Appendix \ref{app:outage}.
\end{proof}

\label{theo:unified}
\end{theorem}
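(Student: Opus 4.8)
The plan is to obtain \eqref{eq:asep_general}, \eqref{out}, and \eqref{rate} from the definitions \eqref{assep}, \eqref{out_orig}, \eqref{rate_orig} together with the conditional SINR \eqref{instant}, by executing the two-step averaging of Section~\ref{sec:gauss_apprx}: first fix the geometry and channels so that the only signal-side randomness is $\tilde g_o\sim\text{Gamma}(m_o,1)$, then decondition over the aggregate interference $\mathcal I$ through the Laplace transform of Lemma~\ref{the_LT}, and finally over the serving distance with $f_{r_o}(x)=2\pi\lambda_B x e^{-\pi\lambda_B x^2}$. Three facts are used repeatedly: $m_o$ is a positive integer, $\tilde g_o$ is independent of $\mathcal I$, and $\mathbb E[\mathcal I^{\,j}e^{-s\mathcal I}]=(-1)^j\frac{\mathrm d^j}{\mathrm ds^j}\mathcal L_{\mathcal I}(s)$, which turns monomials in $\mathcal I$ times an exponential into derivatives of $\mathcal L_{\mathcal I\mid x}$.

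For the outage \eqref{out} I would condition on $r_o=x$ and on $\mathcal I$, set $\mathcal N_o=0$, and write $\{\Upsilon<\theta\}=\{\tilde g_o<\theta x^{\eta}\mathcal I/P\}$. Because $m_o$ is an integer the complementary Gamma CDF is the finite sum $\mathbb P\{\tilde g_o\ge y\}=e^{-y}\sum_{j=0}^{m_o-1}y^{j}/j!$, so with $y=\theta x^{\eta}\mathcal I/P$ each summand is a power of $\mathcal I$ times $e^{-(\theta x^{\eta}/P)\mathcal I}$. Averaging over $\mathcal I$ with the derivative identity above, evaluated at $s=\theta x^{\eta}/P$, and integrating over $x$ against $f_{r_o}$ reproduces \eqref{out}.

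For the ergodic rate \eqref{rate} I would use the Frullani representation $\ln(1+S/\mathcal I)=\int_0^\infty t^{-1}\big(e^{-t\mathcal I}-e^{-t(\mathcal I+S)}\big)\,\mathrm dt$ with conditional signal power $S=Px^{-\eta}\tilde g_o$. Taking expectations and using $S\perp\mathcal I$ factorizes $\mathbb E[e^{-t(\mathcal I+S)}]=\mathcal L_{\mathcal I\mid x}(t)\,\mathbb E[e^{-tS}]$, while the Gamma moment generating function gives $\mathbb E[e^{-tS}]=(1+tPx^{-\eta})^{-m_o}$. Hence the conditional rate is $\int_0^\infty t^{-1}\mathcal L_{\mathcal I\mid x}(t)\big(1-(1+tPx^{-\eta})^{-m_o}\big)\,\mathrm dt$; the change of variable $z=tPx^{-\eta}$ (so $\mathrm dt/t=\mathrm dz/z$) turns the Gamma factor into the clean $(1+z)^{-m_o}$ of \eqref{rate}, with the interference transform carried at the rescaled argument, after which I decondition over $r_o$.

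The ASEP is the hard part and the step I expect to be the main obstacle. Writing $\text{erfc}=1-\text{erf}$ already explains the ``$1-(\cdots)$'' shape of both bracketed terms in \eqref{eq:asep_general}, so the task reduces to two Gamma-signal integral identities, for $Y\sim\text{Gamma}(m,1)$ and $c>0$:
\[ \mathbb E_Y[\text{erf}(\sqrt{cY})]=\tfrac{\Gamma(m+\frac12)}{\Gamma(m)}\tfrac{2}{\pi}\int_0^\infty z^{-1/2}e^{-z(1+m/c)}\,{}_1F_1\big(1-m;\tfrac32;z\big)\,\mathrm dz, \]
\[ \mathbb E_Y[\text{erfc}^2(\sqrt{cY})]=1-\tfrac{4m}{\pi}\int_0^\infty\!\!\int_0^{\pi/4}{}_1F_1\big(m+1;2;\tfrac{-z}{\sin^2\vartheta}\big)\tfrac{e^{-mz/c}}{\sin^2\vartheta}\,\mathrm d\vartheta\,\mathrm dz. \]
The decisive feature of both is that the conditional SINR scale $c$ enters only through $e^{-mz/c}$, i.e. linearly in $1/c$ inside an exponent. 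I would prove them by inserting Craig's finite-range forms $\text{erfc}(\sqrt{u})=\frac{2}{\pi}\int_0^{\pi/2}e^{-u/\sin^2\phi}\,\mathrm d\phi$ and $\text{erfc}^2(\sqrt{u})=\frac{4}{\pi}\int_0^{\pi/4}e^{-u/\sin^2\phi}\,\mathrm d\phi$, doing the Gamma integral over $Y$ to get $(1+c/\sin^2\phi)^{-m}$, and re-expressing that factor through a Laplace/series manipulation whose kernel is the stated $_1F_1$; the cases $m=1$ (where ${}_1F_1(0;\tfrac32;z)=1$ and ${}_1F_1(2;2;x)=e^{x}$) already verify both and fix the constants. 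With these in hand I substitute the per-sample scale $c=\beta P x^{-\eta}/(\mathcal I+\mathcal N_o)$, so that $e^{-mz/c}=e^{-\frac{m z x^{\eta}}{\beta P}(\mathcal I+\mathcal N_o)}$, whose expectation factors as $e^{-\frac{m_o z x^{\eta}}{\beta P}\mathcal N_o}\,\mathcal L_{\mathcal I\mid x}\big(\frac{m_o z x^{\eta}}{\beta P}\big)$ by Lemma~\ref{the_LT}: exactly the interference-plus-noise factor in \eqref{eq:asep_general}. Weighting the $w_1$ and $w_2$ contributions and deconditioning over $r_o$ assembles \eqref{eq:asep_general}; the ``$\approx$'' merely records the Gaussian-signaling approximation (conditionally Gaussian interference makes the AWGN $M$-QAM SEP exact per realization), which is transparent for the information-theoretic $\mathcal O$ and $\mathcal R$ but approximate for a discrete constellation.
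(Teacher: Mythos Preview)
Your proposal is correct and matches the paper's own argument: the outage uses the Gamma CCDF finite sum followed by the Laplace-derivative identity and deconditioning over $r_o$, the ergodic rate uses exactly the Hamdi/Frullani representation you describe (the paper cites \cite{khairi_useful_lemma}), and the ASEP is obtained via the confluent-hypergeometric integral identities for $\mathbb{E}[\text{erfc}(\sqrt{\beta\Upsilon})]$ and $\mathbb{E}[\text{erfc}^2(\sqrt{\beta\Upsilon})]$ that the paper imports from \cite{unif_model, Laila_letter}. Your write-up is in fact more explicit than the paper's appendix, which largely defers to those references for the ASEP step.
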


The ASEP given in \eqref{eq:asep_general} is an approximation due to the Gaussian signaling approximation used in \eqref{eq:mimo_rxed_signal2}. However, the outage probability and ergodic rate are both exact because both are typically derived based on the Gaussian codebooks. The outage probability in \eqref{out} is given for interference-limited networks for tractability, which is a common assumption in cellular network because the interference term usually dominates the noise. Both equations \eqref{out} and \eqref{rate} are approximations  in cases of SDMA and SM-MIMO due to the approximate estimation of the interference as shown in Table~\ref{tab:summary}. 

The ASEP expression given in \eqref{eq:asep_general} presents three advantages over the ASEP expressions given in \cite{eid_Mimo}. First, \eqref{eq:asep_general} provides a unified ASEP expression for all considered MIMO schemes. Second, the ASEP is characterized based on the LT given in Lemma~\ref{the_LT}, which is the same LT used for characterizing the outage probability and ergodic rate. Third, the computational complexity to evaluate \eqref{eq:asep_general} is less than the complexity of the ASEP expressions in \cite{eid_Mimo}. The reduced complexity of \eqref{eq:asep_general} is because it includes a single  hypergeometric function in the exponential term while the expressions for the ASEP in \cite{eid_Mimo} include summations of hypergeometric functions inside the exponential term. In Section~\ref{sec:sim_results}, we show that the advantages presented by \eqref{eq:asep_general} does not sacrifice the model accuracy when compared to the ASEP in \cite{eid_Mimo}, specially that the ASEP in \cite{eid_Mimo} is also approximate in case of  SDMA and SM-MIMO schemes.


\subsection{The Effect of Temporal Correlation on Retransmissions}

Lemma~\ref{the_LT} gives the LT of the inter-cell interference measured at a generic location at an arbitrary point of time in a MIMO cellular network and Theorem~\ref{theo:unified} gives the performance of a MIMO cellular link experiencing such inter-cell interference. The network performance with retransmission cannot be directly deduced from Lemma~\ref{the_LT} and  Theorem~\ref{theo:unified} due to the temporal interference correlation.  Despite that we assume that the channel gains, due to fading, independently change from one time slot to another, the interference across time at a given location is correlated due to the fixed locations of the set of interfering BSs. To incorporate the effect of retransmission into the analysis, the temporal correlation of the interference should be characterized via the joint LT of the interferences across different time slots, as given by the following lemma.

\begin{lemma} \label{joint_LT}
Consider a cellular network with MIMO transmission scheme that can be represented via the equivalent SISO-SINR in \eqref{instant} and BSs modeled via a PPP with intensity $\lambda$ and activity factor {$p$}, the joint LT of the interferences at a given location at two different time slots, denoted by $\mathcal{I}_1$ and $\mathcal{I}_2$, such that the interfering BSs may use different MIMO scheme across time, is given by
\begin{align}
\mathcal{L}&_{\mathcal{I}_1, \mathcal{I}_2  \vert r_o}\left(z_1,z_2 \right) 
= \exp\left\{ - \pi   \lambda r_o^2 \left[  p \left( \mathcal{F}_1 \left( \frac{-2}{\eta}; m_{i,1}, m_{i,2}; 1-\frac{2}{\eta}; -z_1 r_o^{-\eta}, -z_2 r_o^{-\eta}\right)+1 \right)\right.  \right. \notag \\
&\left.   \left. + (1-p) \:\: \left( _{2}F_{1}\left(\frac{-1}{b}, m_{i,2}; 1-\frac{1}{b}; -z P r_o^{-\eta} \right)  +  {}_{2}F_{1}\left(\frac{-1}{b}, m_{i,1}; 1-\frac{1}{b}; -z P r_o^{-\eta} \right) \right)-2 \right] \right\},
\end{align}

\noindent where $m_{i,1}$ and $m_{i,2}$ are the rates of the Gamma distributed equivalent channel gains (given in Table~\ref{tab:summary}) corresponding  to the employed MIMO scheme at the first and second time slots, respectively, and $\mathcal{F}_1\left( \cdot; \cdot, \cdot; \cdot; x, y \right)$ is the Appell Hypergeometric function, which extends the hypergeometric function to two variables $x$ and $y$ \cite{nist}. 
\end{lemma}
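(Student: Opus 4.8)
The plan is to compute the joint transform directly from the PGFL, as in Lemma~\ref{the_LT}, after first disentangling how the two slots are coupled. The \emph{only} source of temporal correlation is that both slots see the same BS positions; the equivalent gains $\tilde{g}_i^{(1)},\tilde{g}_i^{(2)}$ and the per-slot transmit decisions are independent across slots. I would therefore color the deployment by its two-slot activity pattern: let $\Phi_{\mathrm{b}}$ be the BSs active in both slots, and $\Phi_1,\Phi_2$ those active in slot~$1$ only and slot~$2$ only. By the independent-coloring/superposition property of the PPP these are three \emph{independent} PPPs, and with per-slot active intensity $\lambda$ and activity (retention) probability $p$ their intensities are $p\lambda$, $(1-p)\lambda$, and $(1-p)\lambda$. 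Since $\mathcal{I}_1$ collects the contributions of $\Phi_{\mathrm{b}}\cup\Phi_1$ and $\mathcal{I}_2$ those of $\Phi_{\mathrm{b}}\cup\Phi_2$, the independence of these processes and of the fading makes the joint transform factorize,
\begin{align}
\mathcal{L}_{\mathcal{I}_1,\mathcal{I}_2\vert r_o}(z_1,z_2)=\mathcal{L}_{\mathrm{b}}(z_1,z_2)\,\mathcal{L}_{1}(z_1)\,\mathcal{L}_{2}(z_2),
\end{align}
into one ``shared'' factor and two ``marginal'' factors.

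Next I would evaluate each factor by the PGFL exactly as in Lemma~\ref{the_LT}. For a point of $\Phi_{\mathrm{b}}$ the independence of the fading across slots gives the per-point term $\mathbb{E}[e^{-z_1 P x^{-\eta}\tilde{g}^{(1)}}]\,\mathbb{E}[e^{-z_2 P x^{-\eta}\tilde{g}^{(2)}}]=(1+z_1 P x^{-\eta})^{-m_{i,1}}(1+z_2 P x^{-\eta})^{-m_{i,2}}$, using the Gamma-LT identity from step~$(c)$ of Lemma~\ref{the_LT} and allowing distinct shape parameters for the two (possibly different) MIMO schemes. The two marginal factors are literally instances of Lemma~\ref{the_LT} with intensity $(1-p)\lambda$,
\begin{align}
\mathcal{L}_{t}(z_t)=\exp\Big\{-\pi(1-p)\lambda\,r_o^2\big[\,{}_2F_1(-\tfrac1b,m_{i,t};1-\tfrac1b;-z_t P r_o^{-\eta})-1\big]\Big\},\quad b=\tfrac{\eta}{2},
\end{align}
so the only genuinely new object is the cross integral $\int_{r_o}^{\infty}\big(1-(1+z_1 P x^{-\eta})^{-m_{i,1}}(1+z_2 P x^{-\eta})^{-m_{i,2}}\big)x\,\mathrm{d}x$ appearing in $\mathcal{L}_{\mathrm{b}}$.

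The crux is this cross integral. Substituting $t=(r_o/x)^{\eta}$ turns it into $\tfrac{r_o^2}{\eta}\int_0^1 t^{-2/\eta-1}\big(1-(1+z_1 P r_o^{-\eta}t)^{-m_{i,1}}(1+z_2 P r_o^{-\eta}t)^{-m_{i,2}}\big)\,\mathrm{d}t$, which is the two-variable analogue of the integral solved in Lemma~\ref{the_LT} and coincides with the Euler single-integral representation of $\mathcal{F}_1(-\tfrac2\eta;m_{i,1},m_{i,2};1-\tfrac2\eta;-z_1 P r_o^{-\eta},-z_2 P r_o^{-\eta})$; evaluating it yields $\tfrac{r_o^2}{2}[\mathcal{F}_1(\cdots)-1]$. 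Multiplying the three exponential factors leaves the bracket $p(\mathcal{F}_1-1)+(1-p)({}_2F_1+{}_2F_1-2)$, whose constant terms $-p-2(1-p)=-2+p$ regroup with the leftover $+p$ pairing with $p\,\mathcal{F}_1$ to give $p(\mathcal{F}_1+1)$ and leaving $-2$, thereby reproducing the stated exponent $p(\mathcal{F}_1+1)+(1-p)({}_2F_1+{}_2F_1)-2$.

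I expect the genuine difficulty to be localized entirely in the Appell evaluation, and to be analytic rather than combinatorial: the relevant Euler representation has first parameter $a=-2/\eta<0$, so the condition $\mathrm{Re}\,c>\mathrm{Re}\,a>0$ behind the textbook integral formula fails and the naked integral diverges at $t=0$. It is precisely the subtracted constant $1$ in the integrand (equivalently the ``$-1$'' beside $\mathcal{F}_1$) that cancels the $t^{-2/\eta-1}$ singularity at the origin and supplies the analytic continuation of $\mathcal{F}_1$ into this parameter range, just as the analogous ``$-1$'' legitimizes the ${}_2F_1$ in Lemma~\ref{the_LT}; I would make that regularization explicit rather than quoting the convergent-range formula. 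For consistency I would also restore the factor $P$ inside the Appell arguments and attach $z_1,z_2$ to the two marginal hypergeometric terms, which appear to be typographically dropped in the displayed statement.
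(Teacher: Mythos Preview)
Your proposal is correct and follows essentially the same route as the paper: decompose the interferers by their two-slot activity pattern into three independent thinned PPPs (with intensities $p\lambda$, $(1-p)\lambda$, $(1-p)\lambda$ after setting $\lambda=p\lambda_B$), apply the PGFL and the Gamma Laplace transform to each, and then evaluate the resulting integrals. The paper's proof simply writes down the factorized PGFL expression and closes with ``solving the integral completes the proof,'' so your explicit identification of the cross integral with the Euler representation of $\mathcal{F}_1$, your bookkeeping of the constants, and your remarks on the analytic continuation and the typographical slips in the displayed statement are all additional detail beyond what the paper provides rather than a different argument.
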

\begin{proof}
See Appendix \ref{app:joint_LT1}.
\end{proof}

The average coverage probability (defined as $1- \mathcal{O}\left(\theta \right)$) with retransmissions and independent signal decoding is given by 
\begin{align} 
\mathcal{P}_c \left(\theta \right)= \mathbb{P}\left( \bar{\Upsilon}_{1} > \theta\right)+ \mathbb{P}\left( \bar{\Upsilon}_{2} > \theta\right)-\mathbb{P}\left( \bar{\Upsilon}_{1} > \theta, \bar{\Upsilon}_{2} > \theta\right),
\end{align}

\noindent where $\bar{\Upsilon}_{1} $ and $\bar{\Upsilon}_{2}$ are the SIRs at the first and second transmissions.  
 Using the joint LT in Lemma~\ref{joint_LT}, the average coverage probability with retransmission in MIMO cellular network is given by the following theorem.

\begin{theorem} 
\label{theo:joint_LT}
Consider a cellular network with MIMO transmission scheme that can be represented via the equivalent SISO-SINR in \eqref{instant} and BSs modeled via a PPP with intensity $\lambda$, the SIR coverage probability for a generic UE with retransmission such that the serving BS and interfering BSs may use different MIMO schemes across time, is given by  \eqref{eq:Pc},
\begin{figure*}
\small{
\begin{align}
\label{eq:Pc}
\mathcal{P}_c  \left(\theta \right)
&= \int_{0}^{\infty}\!\!\!\!\!\! 2 \pi \lambda_B \: x e^{-\pi \lambda_B x^2} \left[  \sum_{j_1=0}^{m_{o,1}-1} \frac{\left(-1\right)^{j_1}}{j_1 !  } \bar{  \theta }^{j_1 } \frac{\partial^{j_1 }}{\partial z_1^{j_1}} \mathcal{L}_{\mathcal{I}_1 \vert x}\left(z_1 \right) \bigg \vert_{z_1=\bar{\theta}} + \sum_{j_2=0}^{m_{o,2}-1} \frac{\left(-1\right)^{j_2}}{j_2 !  } \bar{  \theta }^{j_2 } \frac{\partial^{j_2 }}{\partial z_2^{j_2}} \mathcal{L}_{\mathcal{I}_2 \vert x}\left(z_2 \right) \bigg \vert_{z_2=\bar{\theta}}\right. \notag \\
 & \quad \quad \quad \quad\quad \quad \quad \quad \quad \left. -  \sum_{j_1=0}^{m_{o,1}-1}  \sum_{j_2=0}^{m_{o,2}-1} \frac{\left(-1\right)^{j_1+j_2}}{j_1 ! j_2 !}   \bar{  \theta }^{j_1+j_2} \frac{\partial^{(j_1+j_2)}}{\partial z_1^{j_1} \partial z_2^{j_2}} \mathcal{L}_{\mathcal{I}_1, \mathcal{I}_2 \vert x}\left(z_1,z_2 \right) \bigg \vert_{z_1= z_2=\bar{\theta}} \right] \mathrm{d}x.
\end{align}
} \hrule
\end{figure*}
\normalsize
where $  \bar{  \theta }=\frac{\theta x^\eta}{P}$. 
\begin{proof}
See Appendix \ref{app:joint_LT2}.
\end{proof}
\end{theorem}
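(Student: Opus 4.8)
The plan is to begin from the inclusion--exclusion identity for $\mathcal{P}_c(\theta)$ stated immediately before the theorem and evaluate each of its three probabilities by first conditioning on the interference powers and then de-conditioning over the serving distance $r_o$. Writing $\bar\theta=\theta x^\eta/P$ with $x=r_o$, the success event $\{\bar\Upsilon_t>\theta\}$ at time slot $t$ is, by \eqref{instant}, precisely $\{\tilde g_{o,t}>\bar\theta\,\mathcal{I}_t\}$, so each probability reduces to a statement about a Gamma-distributed desired gain exceeding a scaled interference power.

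For each marginal term I would condition on $\mathcal{I}_t$ and invoke the fact that, for integer shape $m_{o,t}$, the complementary CDF of a $\mathrm{Gamma}(m_{o,t},1)$ variable is the finite sum $e^{-y}\sum_{j=0}^{m_{o,t}-1}y^{j}/j!$. Setting $y=\bar\theta\,\mathcal{I}_t$ and averaging over $\mathcal{I}_t$ replaces each summand by $\mathbb{E}[\mathcal{I}_t^{\,j}e^{-\bar\theta\mathcal{I}_t}]=(-1)^{j}\,\mathrm{d}^{j}\mathcal{L}_{\mathcal{I}_t\vert x}(z)/\mathrm{d}z^{j}\big\vert_{z=\bar\theta}$, with the marginal Laplace transform taken from Lemma~\ref{the_LT} using the slot-$t$ parameter $m_{i,t}$. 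This reproduces the first two bracketed sums in \eqref{eq:Pc} and is essentially the same computation that underlies \eqref{out}.

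The joint term is where the real work lies and where I expect the main obstacle. The crucial observation is that, although $\mathcal{I}_1$ and $\mathcal{I}_2$ are correlated because the interfering-BS positions are frozen across slots, the desired-signal gains $\tilde g_{o,1}$ and $\tilde g_{o,2}$ come from independent fading realizations and are independent of the interference; hence, \emph{conditioned on the pair} $(\mathcal{I}_1,\mathcal{I}_2)$, the two success events factorize. I would therefore write the conditional joint success probability as the product of two finite Gamma sums, multiply out, and average over the joint law of $(\mathcal{I}_1,\mathcal{I}_2)$. Every resulting term is a mixed quantity $\mathbb{E}[\mathcal{I}_1^{\,j_1}\mathcal{I}_2^{\,j_2}e^{-\bar\theta(\mathcal{I}_1+\mathcal{I}_2)}]$, which equals $(-1)^{j_1+j_2}$ times the mixed partial derivative $\partial^{\,j_1+j_2}\mathcal{L}_{\mathcal{I}_1,\mathcal{I}_2\vert x}(z_1,z_2)/\partial z_1^{j_1}\partial z_2^{j_2}$ evaluated at $z_1=z_2=\bar\theta$, with the \emph{joint} Laplace transform supplied by Lemma~\ref{joint_LT} (which already absorbs the activity factor $p$ and the possibility of distinct MIMO schemes through $m_{i,1},m_{i,2}$). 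This yields the double sum in \eqref{eq:Pc}.

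To finish, all three terms are still conditioned on $r_o=x$, so I would de-condition by integrating against the serving-distance PDF $f_{r_o}(x)=2\pi\lambda_B x e^{-\pi\lambda_B x^2}$, producing the outer integral and its weight in \eqref{eq:Pc}. The two points needing care are exactly those flagged above: justifying the conditional factorization of the joint success event, which rests on the modeling assumption that fading is independent across time slots so that all temporal memory resides in the interference; and the integrality of $m_{o,1},m_{o,2}$ (valid for every entry of Table~\ref{tab:summary}), which is what makes the Gamma complementary CDFs finite sums and legitimizes the Laplace-derivative representation.
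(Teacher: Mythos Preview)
Your proposal is correct and follows essentially the same route as the paper: inclusion--exclusion for $\mathcal{P}_c(\theta)$, the finite-sum Gamma CCDF conditioned on the interference, the Laplace-derivative identity $\mathbb{E}[\mathcal{I}_1^{j_1}\mathcal{I}_2^{j_2}e^{-z_1\mathcal{I}_1-z_2\mathcal{I}_2}]=(-1)^{j_1+j_2}\partial^{\,j_1+j_2}\mathcal{L}_{\mathcal{I}_1,\mathcal{I}_2}/\partial z_1^{j_1}\partial z_2^{j_2}$, and finally deconditioning over $r_o$. The paper's appendix focuses on the joint term and proves the bivariate derivative identity explicitly via Leibniz differentiation under the integral, but the structure and the key ingredient---conditional factorization of the two success events thanks to the independence of $\tilde g_o^{(1)}$ and $\tilde g_o^{(2)}$---are exactly what you identified.
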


Before giving numerical results and insights obtained from the developed mathematical model, we first illustrate how the equivalent SISO-SINR model given in Proposition~\ref{prop} holds for the considered MIMO schemes.

\section{Characterizing MIMO Configurations}
\label{sec:mimo_schemes}

This section details the methodology to abstract different MIMO configuration via the equivalent SISO model given in  Proposition~\ref{prop} with parameters given in Table~\ref{tab:summary}. In order to conduct the analysis for the different MIMO setups, we first need to define the set $\left\{\breve{\textbf{H}}\right\}$ as the set of channel matrices that affect the aggregate interference signals due to precoding and/or combining. For instance, due to precoding, combining, and equalization, the interference from the $i^{th}$ interfering BS is multiplied by $\textbf{W}_o \textbf{H}_i \textbf{V}_i$, and hence, $\left\{\breve{\textbf{H}}\right\}=\{\textbf{H}_o, \tilde{\textbf{H}}_i\}$, where ${\textbf{H}}_o$, and $\tilde{\textbf{H}}_i$ are the channel matrices between, respectively, the intended BS and the test user, and the $i^{th}$ interfering BSs and its associated users. 
The methodology to characterize the distribution of the equivalent channel gains are given in the following steps:
\begin{enumerate}

\item  \textbf{SNR characterization:} $\tilde{g}_o$ is first characterized by projecting the signal of the intended data-stream on the null-space of the signals of the other data streams that are multiplexed by the intended BSs. Note that we may manipulate the resultant SNR such that the noise variance is not affected by any random variable as in \eqref{instant} and the projection effect is contained in $\tilde{g}_o$ and $\tilde{g}_i$ only. 

\item  \textbf{Per-stream equivalent channel gain representation:} $\tilde{g}_i$ from each interfering BS is characterized based on the manipulation done in the SNR characterization in the previous step and characterizing $\vert a_{l,k}^{(i)} \vert^2$ given in  \eqref{eq:mimo_rxed_signal2}. Note that $\vert a_{l,k}^{(i)} \vert^2$ is characterized based on $\left\{\breve{\textbf{{H}}}\right\}$  which captures the channel gain matrices involved in precoding the signal at the ${i^{th}}$ BS and combining the interfering symbols at the test UE.

\end{enumerate}

Based on the aforementioned two steps, the equivalent SISO-SINR given in Proposition~\ref{prop} for the MIMO schemes given in Table~\ref{tab:summary} is illustrated in this section.

\subsubsection{\textbf{Single-Input-Multiple-Output (SIMO) systems}}
for a SIMO system, receive diversity is achieved using one transmit antenna (i.e., $L=N_t=1$) and $N_r$ receive antennas. Since $N_t=1$, then the intended and interfering channel vectors are denoted by $\textbf{h}_o$ and $\textbf{h}_i \in \mathbb{C}^{N_r \times 1}$, respectively. By employing Maximum Ratio-Combining (MRC) strategy to combine the received signals, then $\bar{\textbf{w}}_o^T = \textbf{h}_o^{H}$. The equivalent SISO channel gains are given by the following lemma.

\begin{lemma}
\label{lemma:SIMO}
For a receive diversity SIMO setup technique, the Gamma distribution parameters for the equivalent intended and interfering channel gains are given by $\boldsymbol{m_o=N_r}$ and $\boldsymbol{m_i=1}$, respectively.

\begin{proof}
See Appendix \ref{app:all_lemmas}.
\end{proof}
\end{lemma}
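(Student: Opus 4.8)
The plan is to specialize the post-combining signal model \eqref{eq:mimo_rxed_signal} to the SIMO configuration, read off the equivalent gains $\tilde{g}_o$ and $\tilde{g}_i$ of \eqref{instant} as explicit functions of the Rayleigh channel vectors, and then compute their distributions directly.

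First I would set $N_t = L = 1$, so that $\textbf{H}_o$ and $\textbf{H}_i$ reduce to the column vectors $\textbf{h}_o,\textbf{h}_i \in \mathbb{C}^{N_r\times 1}$ and the precoders $\textbf{v}_{o,1},\textbf{v}_{i,1}$ become unity. Substituting the MRC combiner $\bar{\textbf{w}}_o^T = \textbf{h}_o^H$ into \eqref{eq:mimo_rxed_signal} yields the intended scale $g_o = \textbf{h}_o^H\textbf{h}_o = \|\textbf{h}_o\|^2$, the single interference coefficient $a^{(i)} = \textbf{h}_o^H\textbf{h}_i$, and a filtered noise term $\bar{\textbf{w}}_o^T\textbf{n}$ of variance $\mathcal{N}_o\|\textbf{h}_o\|^2$. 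Forming the conditional SINR and dividing numerator and denominator by $\|\textbf{h}_o\|^2$ --- the normalization prescribed in the SNR-characterization step of Section~\ref{sec:mimo_schemes} that keeps the noise power free of channel randomness --- casts the SINR into the form \eqref{instant} with
\[
\tilde{g}_o = \|\textbf{h}_o\|^2,\qquad \tilde{g}_i = \frac{|\textbf{h}_o^H\textbf{h}_i|^2}{\|\textbf{h}_o\|^2}.
\]

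Next I would identify the two laws. Since each entry $h_{o,n}\sim\mathcal{CN}(0,1)$ gives $|h_{o,n}|^2\sim\text{Gamma}(1,1)$ and the $N_r$ receive branches are independent, $\tilde{g}_o = \sum_{n=1}^{N_r}|h_{o,n}|^2\sim\text{Gamma}(N_r,1)$, establishing $m_o = N_r$. For the interference gain I would write $\tilde{g}_i = |\textbf{u}^H\textbf{h}_i|^2$ with the unit vector $\textbf{u} = \textbf{h}_o/\|\textbf{h}_o\|$ and condition on $\textbf{h}_o$: because $\textbf{h}_i$ has i.i.d. $\mathcal{CN}(0,1)$ entries and is thus unitarily invariant and independent of $\textbf{h}_o$, the projection $\textbf{u}^H\textbf{h}_i = \sum_n u_n^* h_{i,n}$ is $\mathcal{CN}(0,\|\textbf{u}\|^2) = \mathcal{CN}(0,1)$ for every realization of $\textbf{u}$. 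Hence $\tilde{g}_i = |\textbf{u}^H\textbf{h}_i|^2\sim\text{Gamma}(1,1)$, giving $m_i = 1$; moreover, since this exponential law is the same for all $\textbf{h}_o$, the gain $\tilde{g}_i$ de-conditions cleanly to be independent of $\tilde{g}_o$ and i.i.d. across the interferers, exactly as Lemma~\ref{the_LT} assumes.

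The one place that needs more than a routine computation is the claim that the projection of $\textbf{h}_i$ onto the \emph{data-dependent} direction $\textbf{u}$ is a standard complex Gaussian irrespective of $\textbf{u}$ and independent of $\textbf{h}_o$; this is precisely where I would invoke the rotational (unitary) invariance of $\mathcal{CN}(0,\textbf{I})$ together with the independence of $\textbf{h}_i$ and $\textbf{h}_o$. It is also what makes the SIMO representation \emph{exact} rather than approximate: with $L=1$ there is a single data stream and hence no multiple precoding vectors per BS, so the dependence that forces the $\chi^2$ approximation discussed in the proof of Proposition~\ref{prop} simply does not arise.
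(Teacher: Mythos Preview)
Your proof is correct and follows essentially the same route as the paper: specialize the MRC-combined signal, normalize by $\|\textbf{h}_o\|^2$ so the noise variance is $\mathcal{N}_o$, and read off $\tilde{g}_o=\|\textbf{h}_o\|^2\sim\text{Gamma}(N_r,1)$ and $\tilde{g}_i=|\textbf{h}_o^H\textbf{h}_i|^2/\|\textbf{h}_o\|^2\sim\text{Gamma}(1,1)$. Your explicit unitary-invariance argument for the law of $\tilde{g}_i$ and its independence from $\tilde{g}_o$ is a welcome elaboration that the paper only states implicitly.
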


\subsubsection{\textbf{Orthogonal Space-Time Block Coding (OSTBC)}} Let the orthogonal space-time block codes be transmitted over $T$ time instants, and only $N_s \leq N_t$ transmit antennas are active per time instant. The received signal is equalized via the equalizer $\textbf{W}_o= \frac{\textbf{H}_{\text{eff}}^{H}} {\| \textbf{H}_o\|_{\text{F}}}$, where $\textbf{H}_{\text{eff}}$ is the effective intended channel matrix depending on the employed orthogonal code~\cite{STBC_harpret, eid_Mimo}. Since no precoding is applied then $\textbf{V}_o=\textbf{V}_i= \textbf{I}_{N_t}$. The equivalent SISO channel gains are given by the following lemma.

\begin{lemma}
\label{lemma:OSTBC}
A space-time encoder is employed at the network BSs. Then, the Gamma distribution parameters are given as $\boldsymbol{m_o=N_s N_r}$ and $\boldsymbol{m_i=N_s}$.

\begin{proof}
See Appendix~\ref{app:all_lemmas}.
\end{proof}

\end{lemma}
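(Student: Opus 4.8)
The plan is to exploit the defining orthogonality property of OSTBC, namely that the effective intended channel satisfies $\textbf{H}_{\text{eff}}^H \textbf{H}_{\text{eff}} = \|\textbf{H}_o\|_F^2\, \textbf{I}_{N_s}$, which is precisely the identity that decouples the $N_s$ streams into parallel SISO channels. Applying the equalizer $\textbf{W}_o = \textbf{H}_{\text{eff}}^H/\|\textbf{H}_o\|_F$ then yields $\textbf{W}_o \textbf{H}_{\text{eff}} = \|\textbf{H}_o\|_F\,\textbf{I}_{N_s}$, so the useful scaling of every stream is $g_o = \|\textbf{H}_o\|_F$ and hence $\tilde{g}_o = |g_o|^2 = \|\textbf{H}_o\|_F^2$. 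This matches the structure of \eqref{eq:mimo_rxed_signal2} with a scaled identity effective channel, as required by Proposition~\ref{prop}.

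First I would characterize $\tilde{g}_o$. Since only $N_s$ of the $N_t$ transmit antennas are active and there are $N_r$ receive antennas, $\|\textbf{H}_o\|_F^2 = \sum_{n=1}^{N_r}\sum_{m=1}^{N_s}|h_{nm}|^2$ is a sum of $N_s N_r$ terms, each $|h_{nm}|^2$ being the squared magnitude of an independent $\mathcal{CN}(0,1)$ entry, i.e.\ an i.i.d.\ unit-mean exponential (equivalently $\text{Gamma}(1,1)$). The sum of $N_s N_r$ such variables is $\text{Gamma}(N_s N_r,1)$, which gives $m_o = N_s N_r$.

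Next I would characterize the interference gain $\tilde{g}_i = \sum_{k=1}^{N_s}|a^{(i)}_{l,k}|^2$ with $a^{(i)}_{l,k} = \bar{\textbf{w}}_{o,l}^T \textbf{H}_i \textbf{v}_{i,k}$. The key step is to observe that the same orthogonality identity forces each combining vector $\bar{\textbf{w}}_{o,l}$ to be unit-norm: the $l$-th diagonal entry of $\textbf{H}_{\text{eff}}^H \textbf{H}_{\text{eff}} = \|\textbf{H}_o\|_F^2\,\textbf{I}$ equals the squared norm of the $l$-th column of $\textbf{H}_{\text{eff}}$, so after dividing by $\|\textbf{H}_o\|_F$ the corresponding row of $\textbf{W}_o$ has unit norm, i.e.\ $\|\bar{\textbf{w}}_{o,l}\|^2 = 1$. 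Because OSTBC applies no precoding, $\textbf{v}_{i,k}$ simply selects the $k$-th active interfering antenna, so $a^{(i)}_{l,k} = \bar{\textbf{w}}_{o,l}^T (\textbf{H}_i)_{:,k}$ is the inner product of the fixed unit vector $\bar{\textbf{w}}_{o,l}$ (a function of $\textbf{H}_o$ only) with the $k$-th column of the interfering channel $\textbf{H}_i$, which is independent of $\textbf{H}_o$ and distributed as $\mathcal{CN}(0,\textbf{I})$. By rotation invariance of the i.i.d.\ complex Gaussian vector, conditioned on $\bar{\textbf{w}}_{o,l}$ each $a^{(i)}_{l,k} \sim \mathcal{CN}(0,\|\bar{\textbf{w}}_{o,l}\|^2) = \mathcal{CN}(0,1)$, and since the columns of $\textbf{H}_i$ are mutually independent the $N_s$ coefficients are i.i.d. Consequently $\tilde{g}_i$ is a sum of $N_s$ i.i.d.\ unit-mean exponentials, i.e.\ $\text{Gamma}(N_s,1)$, which gives $m_i = N_s$.

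The main subtlety --- and the reason OSTBC is listed as \emph{exact} rather than \emph{approximate} in Table~\ref{tab:summary} --- lies in this last step: because the interfering precoder is the identity, the relevant interfering directions $(\textbf{H}_i)_{:,k}$ are genuinely independent standard Gaussian vectors with no data-dependent correlation to discard, so the $\text{Gamma}(N_s,1)$ law for $\tilde{g}_i$ holds without approximation. I expect the most delicate bookkeeping to be reconciling the time-spreading of the orthogonal code (the stacking over $T$ time instants that defines $\textbf{H}_{\text{eff}}$) with the per-symbol representation in \eqref{eq:mimo_rxed_signal2}, so that both the unit-norm claim for $\bar{\textbf{w}}_{o,l}$ and the independence across the $N_s$ active interfering antennas are established from the generic orthogonality identity rather than from any particular code such as Alamouti.
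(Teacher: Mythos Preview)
Your proposal is correct and follows essentially the same route as the paper: exploit the OSTBC orthogonality $\textbf{H}_{\text{eff}}^H\textbf{H}_{\text{eff}}=\|\textbf{H}_o\|_F^2\,\textbf{I}$ to get $\tilde{g}_o=\|\textbf{H}_o\|_F^2\sim\text{Gamma}(N_sN_r,1)$, and then use the unit-norm combiner together with the Gaussianity and independence of the interfering channel to obtain $a^{(i)}_{l,k}\sim\mathcal{CN}(0,1)$ i.i.d.\ over $k$, hence $\tilde{g}_i\sim\text{Gamma}(N_s,1)$. The only substantive difference is the point you yourself flag: after stacking over $T$ time instants the interference term is $\textbf{W}_o\textbf{H}_{i,\text{eff}}$ (each interferer also uses the space--time code), so the paper writes $a^{(i)}_{l,k}=\boldsymbol{q}_l^H\boldsymbol{q}_{i,k}/\|\textbf{H}_o\|_F$ with $\boldsymbol{q}_{i,k}$ the $k$-th column of $\textbf{H}_{i,\text{eff}}$ rather than of $\textbf{H}_i$; the ensuing distributional argument is identical to yours.
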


\subsubsection{\textbf{Zero-Forcing beamforming with ML Receiver (ZF-Rx)}} ZF is a low-complexity suboptimal, yet efficient, technique to suppress interference from other transmitted symbols in the network. In order to recover the distinct transmitted streams, the received signal is multiplied by the equalizing matrix $\textbf{W}_o=\left(\textbf{H}_o^H \textbf{H}_o\right)^{-1} \textbf{H}_o^H$ representing the pseudo-inverse of the intended channel matrix $\textbf{H}_o$, whereas we assume no precoding at the transmitters side, i.e., $\textbf{V}_o=\textbf{V}_i=\textbf{I}_{N_t}$. The equivalent SISO channel gains are given by the following lemma.

\begin{lemma}
\label{lemma:ZF_Rx}
By employing a ZF-Rx such that $L=N_t$ distinct streams are being transmitted from the BSs, it can be shown that $\boldsymbol{m_o=N_r-N_t+1}$ and $\boldsymbol{m_i=N_t}$.

\begin{proof}
See Appendix~\ref{app:all_lemmas}.
\end{proof}
\end{lemma}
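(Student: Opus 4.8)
The plan is to reduce the per-stream signal in \eqref{eq:mimo_rxed_signal2} to the canonical form \eqref{instant}, identify the two equivalent channel gains $\tilde{g}_o$ and $\tilde{g}_i$, and determine their laws separately. Starting from the ZF equalizer $\textbf{W}_o = (\textbf{H}_o^H \textbf{H}_o)^{-1}\textbf{H}_o^H$ with $\textbf{V}_o = \textbf{V}_i = \textbf{I}_{N_t}$, the product $\textbf{W}_o \textbf{H}_o \textbf{V}_o = \textbf{I}_{N_t}$, so the useful coefficient for the $l$-th stream is deterministic while the ZF operation colours the noise: the term $\bar{\textbf{w}}_{o,l}^T \textbf{n}$ has variance $\mathcal{N}_o \|\bar{\textbf{w}}_{o,l}\|^2 = \mathcal{N}_o [(\textbf{H}_o^H \textbf{H}_o)^{-1}]_{ll}$, using $\textbf{W}_o \textbf{W}_o^H = (\textbf{H}_o^H \textbf{H}_o)^{-1}$. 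Following the SNR-characterisation step of Section~\ref{sec:mimo_schemes}, I would normalise the whole signal by $\|\bar{\textbf{w}}_{o,l}\|$ so that the noise variance becomes the constant $\mathcal{N}_o$ demanded by \eqref{instant}; this places all randomness into $\tilde{g}_o = 1/[(\textbf{H}_o^H \textbf{H}_o)^{-1}]_{ll}$ and rescales each interference coefficient by the unit vector $\hat{\textbf{w}}_{o,l} = \bar{\textbf{w}}_{o,l}/\|\bar{\textbf{w}}_{o,l}\|$.

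The hard part will be establishing that $\tilde{g}_o \sim \text{Gamma}(N_r - N_t + 1, 1)$, and the key is a geometric reading of $1/[(\textbf{H}_o^H \textbf{H}_o)^{-1}]_{ll}$. Because the zero-forcing constraint $\textbf{W}_o \textbf{H}_o = \textbf{I}$ forces the $l$-th row of $\textbf{W}_o$ to annihilate every interfering column $\{[\textbf{H}_o]_{:,k}\}_{k \neq l}$ and to return unity on the $l$-th column, while lying in the column space of $\textbf{H}_o$, this row is (up to a scalar) the projection $\textbf{p}_l$ of the $l$-th column $\textbf{h}_l$ onto the orthogonal complement of $\text{span}\{\textbf{h}_k : k \neq l\}$. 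A short computation then yields $1/[(\textbf{H}_o^H \textbf{H}_o)^{-1}]_{ll} = \|\textbf{p}_l\|^2$. Since the columns of $\textbf{H}_o$ are i.i.d. $\mathcal{CN}(0, \textbf{I}_{N_r})$, that orthogonal complement has dimension $N_r - N_t + 1$ almost surely and is independent of $\textbf{h}_l$; projecting the isotropic Gaussian $\textbf{h}_l$ onto it gives, conditionally on the subspace, a sum of $N_r - N_t + 1$ independent unit-mean exponentials. As the conditional law is invariant to the subspace orientation, the unconditional distribution is $\text{Gamma}(N_r - N_t + 1, 1)$, so $m_o = N_r - N_t + 1$.

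For the interference gains I would work directly with the coefficients $a_{l,k}^{(i)} = \hat{\textbf{w}}_{o,l}^T [\textbf{H}_i]_{:,k}$ after the above normalisation, recalling from the sketch of Proposition~\ref{prop} that $\tilde{g}_i = \sum_{k=1}^{L} |a_{l,k}^{(i)}|^2$. Conditioned on $\textbf{H}_o$, hence on $\hat{\textbf{w}}_{o,l}$, each $a_{l,k}^{(i)}$ is a linear combination of the independent $\mathcal{CN}(0,1)$ entries of the $k$-th column of $\textbf{H}_i$ with unit-norm weights, so $a_{l,k}^{(i)} \sim \mathcal{CN}(0,1)$; moreover distinct columns of $\textbf{H}_i$ are independent, which makes $\{a_{l,k}^{(i)}\}_{k=1}^{N_t}$ i.i.d. standard complex Gaussians. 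Therefore $\tilde{g}_i$ is a sum of $L = N_t$ independent squared magnitudes, i.e. $\tilde{g}_i \sim \text{Gamma}(N_t, 1)$, giving $m_i = N_t$.

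Finally, I would note that $\hat{\textbf{w}}_{o,l}$ depends only on $\textbf{H}_o$ and is independent of every $\textbf{H}_i$, so deconditioning over $\textbf{H}_o$ leaves both distributions unchanged; and because the interfering precoders equal the identity, the per-BS coefficients are genuinely independent, so the gamma law for $\tilde{g}_i$ is \emph{exact}, matching the ``Exact'' entry in Table~\ref{tab:summary}. Substituting the two gains into \eqref{instant} then completes the argument. I expect the projection identity $1/[(\textbf{H}_o^H \textbf{H}_o)^{-1}]_{ll} = \|\textbf{p}_l\|^2$ and the rotational-invariance step for $\tilde{g}_o$ to be the only non-routine ingredients.
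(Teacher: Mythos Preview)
Your argument is correct and follows the same two-step template as the paper's proof in Appendix~\ref{app:all_lemmas}: normalise the post-ZF noise so that its variance returns to $\mathcal{N}_o$, read off $\tilde{g}_o = 1/[(\textbf{H}_o^H \textbf{H}_o)^{-1}]_{ll}$ and $\tilde{g}_i = \sum_{k=1}^{N_t}|\hat{\textbf{w}}_{o,l}^T [\textbf{H}_i]_{:,k}|^2$, and identify their gamma laws. The one substantive difference is in how you establish $m_o$: the paper simply invokes the textbook fact that the diagonal entry of $(\textbf{H}_o^H\textbf{H}_o)^{-1}$ is inverse-Gamma with shape $N_r-N_t+1$ (citing \cite{paulraj_book}), whereas you supply a self-contained geometric proof via the projection identity $1/[(\textbf{H}_o^H\textbf{H}_o)^{-1}]_{ll}=\|\textbf{p}_l\|^2$ and rotational invariance. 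Your treatment of $\tilde{g}_i$ is also slightly cleaner than the paper's, which writes the interference variance as a product $(\textbf{W}_o\textbf{W}_o^H)_{ll}(\textbf{H}_i\textbf{H}_i^H)_{ll}$ without justification; your conditioning-on-$\hat{\textbf{w}}_{o,l}$ argument makes the independence and the $\mathcal{CN}(0,1)$ law of each $a_{l,k}^{(i)}$ explicit, and correctly explains why the result is exact rather than approximate.
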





\subsubsection{\textbf{Space-Division Multiple Access (SDMA)}}
SDMA is used to accommodate more users on the same resources to increase the network capacity. In this case, we consider that each BS is equipped by $N_t$ transmit antennas and applies ZF transmission to simultaneously serve $\mathcal{K}$ single-antenna UEs that are independently and randomly distributed within its coverage area. To avoid rank-deficiency, we let $N_t \geq \mathcal{K}$, and hence, the number of data streams $L=\mathcal{K}$. A ZF-precoding in the form of $\textbf{V}_o=\left[ \textbf{v}_1 , \textbf{v}_2 \cdots \textbf{v}_{\mathcal{K}}\right]$ such that
 $\textbf{v}_{ l}=\frac{ \textbf{q}_{ l}}{\| \textbf{q}_{ l}\|}$ and $ \textbf{q}_{ l}$ is defined as the $ l^{th}$ column of ${\textbf{Q}}=\textbf{H}_o^H \left( \textbf{H}_o \textbf{H}_o^H\right)^{-1}$ is applied by the test BS and no combining is applied  at the single antenna test UE, and heck, $\textbf{W}_o=\textbf{I}_{\mathcal{K}}$. The interfering BSs apply the same precoding and combining strategy, and hence, the interfering precoding matrices are in the form  $\textbf{V}_i=\left[ \textbf{v}_{i,1} , \textbf{v}_{i,2} \cdots \textbf{v}_{i,\mathcal{K}}\right]$ such that
 $\textbf{v}_{i, k}=\frac{ \textbf{q}_{ k}}{\| \textbf{q}_{ k}\|}$ and $ \textbf{q}_{k}$  is the $k^{th}$ column of $\textbf{Q}_i=\tilde{\textbf{H}}_i^H \left(\tilde{ \textbf{H}}_i \tilde{\textbf{H}}_i^H\right)^{-1}$, note that $\tilde{\textbf{H}}_i\neq \textbf{H}_i$ is the interfering channel matrix towards the corresponding intended users. The equivalent SISO channel gains are given by the following lemma.

\begin{lemma}
\label{lemma:SDMA}
In a multi-user MIMO setup, the corresponding Gamma distributions parameters are given by $\boldsymbol{m_o= N_t - \mathcal{K}+1}$, and $\boldsymbol{m_i \approx \mathcal{K}}$. 
\begin{proof}
See Appendix~\ref{app:all_lemmas}.
\end{proof}
 \end{lemma}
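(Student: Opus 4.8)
The plan is to follow the two-step recipe of Section~\ref{sec:mimo_schemes}, treating the intended gain $\tilde{g}_o$ and the interference gain $\tilde{g}_i$ separately, and to locate the single approximation that forces $m_i\approx\mathcal{K}$ while $m_o$ stays exact. First I would characterize $\tilde{g}_o$ for the test UE, which is one of the $\mathcal{K}$ co-scheduled single-antenna users (say user $l$), so its downlink channel is the $l$-th row $\textbf{h}_{o,l}^T$ of $\textbf{H}_o\in\mathbb{C}^{\mathcal{K}\times N_t}$. With $\textbf{Q}=\textbf{H}_o^H(\textbf{H}_o\textbf{H}_o^H)^{-1}$ and $\textbf{v}_l=\textbf{q}_l/\|\textbf{q}_l\|$, the ZF precoder kills all cross terms $\textbf{h}_{o,l}^T\textbf{v}_k$, $k\neq l$, because $\textbf{H}_o\textbf{Q}=\textbf{I}_{\mathcal{K}}$; the surviving intended scale is $g_o=\textbf{h}_{o,l}^T\textbf{v}_l=\textbf{h}_{o,l}^T\textbf{q}_l/\|\textbf{q}_l\|=1/\|\textbf{q}_l\|$, so $\tilde{g}_o=|g_o|^2=1/\|\textbf{q}_l\|^2$. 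A short computation using Hermitian symmetry gives $\|\textbf{q}_l\|^2=[(\textbf{H}_o\textbf{H}_o^H)^{-1}]_{ll}$, hence $\tilde{g}_o=1/[(\textbf{H}_o\textbf{H}_o^H)^{-1}]_{ll}$. Since $\textbf{H}_o$ has i.i.d.\ $\mathcal{CN}(0,1)$ entries with $N_t\geq\mathcal{K}$, the Gram matrix $\textbf{H}_o\textbf{H}_o^H$ is a central complex Wishart matrix, and the reciprocal of the $l$-th diagonal entry of its inverse is a standard Wishart result: Gamma with shape $N_t-\mathcal{K}+1$ and unit scale. This yields $m_o=N_t-\mathcal{K}+1$ exactly.

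Next I would characterize $\tilde{g}_i$. With $\textbf{W}_o=\textbf{I}_{\mathcal{K}}$ at the single-antenna receiver, the combining vector $\bar{\textbf{w}}_{o,l}^T$ simply selects the $l$-th row of $\textbf{H}_i$, so the interference coefficients reduce to $a_{l,k}^{(i)}=\textbf{h}_i^T\textbf{v}_{i,k}$, where $\textbf{h}_i\in\mathbb{C}^{N_t\times1}$ is the channel from the $i$-th interfering BS to the test UE and the $\textbf{v}_{i,k}$ are its unit-norm ZF precoders. The decisive structural fact is that each $\textbf{v}_{i,k}$ is constructed from $\tilde{\textbf{H}}_i$, the channel from BS $i$ to its \emph{own} served users, which is independent of $\textbf{h}_i$. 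Conditioning on $\textbf{v}_{i,k}$, the projection $\textbf{h}_i^T\textbf{v}_{i,k}$ of an isotropic $\mathcal{CN}(0,\textbf{I})$ vector onto a fixed unit vector is $\mathcal{CN}(0,1)$, so each $|a_{l,k}^{(i)}|^2$ is unit-mean exponential, i.e.\ Gamma$(1,1)$. Summing over the $\mathcal{K}$ streams gives $\tilde{g}_i=\sum_{k=1}^{\mathcal{K}}|a_{l,k}^{(i)}|^2$.

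The hard part is that the $\mathcal{K}$ projections are jointly Gaussian (conditioned on the precoders) with cross-covariances $\mathbb{E}[(\textbf{h}_i^T\textbf{v}_{i,k})(\textbf{h}_i^T\textbf{v}_{i,j})^*]=\textbf{v}_{i,j}^H\textbf{v}_{i,k}$, and the ZF precoders $\textbf{v}_{i,k}$---normalized columns of the pseudo-inverse $\tilde{\textbf{H}}_i^H(\tilde{\textbf{H}}_i\tilde{\textbf{H}}_i^H)^{-1}$---are in general not mutually orthogonal. Were they orthonormal, the $\{a_{l,k}^{(i)}\}_k$ would be i.i.d.\ $\mathcal{CN}(0,1)$ and $\tilde{g}_i$ would be exactly Gamma$(\mathcal{K},1)$. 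I would therefore invoke the approximation already flagged in the proof of Proposition~\ref{prop}: neglect the correlation carried by the off-diagonal inner products $\textbf{v}_{i,j}^H\textbf{v}_{i,k}$, $j\neq k$, and treat the $\mathcal{K}$ summands as independent. This gives the approximate shape parameter $m_i\approx\mathcal{K}$, accounting for the ``Approx.'' entry in Table~\ref{tab:summary}, with the accuracy of ignoring the precoder correlation to be confirmed by the Monte-Carlo comparison in Section~\ref{sec:sim_results}.
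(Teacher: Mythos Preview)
Your proposal is correct and follows essentially the same route as the paper's proof in Appendix~\ref{app:SDMA}: both obtain $\tilde{g}_o=1/\|\textbf{q}_l\|^2$ from the ZF structure and invoke the Wishart result to get $m_o=N_t-\mathcal{K}+1$, and both identify the sole approximation as treating the $\mathcal{K}$ terms $|a^{(i)}_{l,k}|^2=|\textbf{h}_{i,l}\textbf{v}_{i,k}|^2$ as independent despite the dependence among the precoding columns $\textbf{v}_{i,k}$. Your account is in fact more explicit than the paper's on two points---the identity $\|\textbf{q}_l\|^2=[(\textbf{H}_o\textbf{H}_o^H)^{-1}]_{ll}$ and the precise source of correlation (non-orthogonality of the $\textbf{v}_{i,k}$ rather than merely ``non-independence'')---but the underlying argument is the same.
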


\begin{corollary}{\textbf{Single-User Beamforming (SU-BF):}}
\label{cor:MISO}

The SDMA scenario reduces to SU-MISO setting if the number of served users in the network is $\mathcal{K}=1$. Hence, $m_o=N_t$ and $m_i=1$.
\end{corollary}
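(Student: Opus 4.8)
The plan is to specialize the SDMA construction of Lemma~\ref{lemma:SDMA} to a single served user and verify directly that both Gamma parameters follow, while also explaining why the approximation flagged for SDMA becomes exact in this regime. First I would set $\mathcal{K}=1$ in the zero-forcing precoder $\textbf{V}_o=[\textbf{v}_1]$ with $\textbf{v}_1=\textbf{q}_1/\|\textbf{q}_1\|$ and $\textbf{Q}=\textbf{H}_o^H(\textbf{H}_o\textbf{H}_o^H)^{-1}$. With a single single-antenna user the channel $\textbf{H}_o$ collapses to a $1\times N_t$ row vector, which I denote $\textbf{h}_o^T$, so $\textbf{H}_o\textbf{H}_o^H=\|\textbf{h}_o\|^2$ is a scalar and $\textbf{Q}=\textbf{h}_o^*/\|\textbf{h}_o\|^2$. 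Hence $\textbf{v}_1=\textbf{h}_o^*/\|\textbf{h}_o\|$, i.e., the zero-forcing precoder degenerates exactly to maximum-ratio (conjugate) transmit beamforming. This identifies the $\mathcal{K}=1$ case with the single-user MISO transmit-diversity setting, establishing the first claim of the corollary.

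Next I would compute the intended equivalent gain. Since the single-antenna receiver applies no combining ($\textbf{W}_o=1$), the useful scalar in \eqref{eq:mimo_rxed_signal2} is $g_o=\textbf{h}_o^T\textbf{v}_1=\|\textbf{h}_o\|$, so that $\tilde{g}_o=|g_o|^2=\|\textbf{h}_o\|^2$. As $\textbf{h}_o$ has $N_t$ i.i.d. $\mathcal{CN}(0,1)$ entries, $\|\textbf{h}_o\|^2$ is a sum of $N_t$ independent unit-mean exponential random variables, hence $\tilde{g}_o\sim\text{Gamma}(N_t,1)$ and $m_o=N_t$, matching the substitution $\mathcal{K}=1$ in $m_o=N_t-\mathcal{K}+1$.

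For the interfering gain I would return to the coefficient $a^{(i)}_{l,k}$ in \eqref{eq:mimo_rxed_signal2}. With $\mathcal{K}=1$ each interfering BS transmits a single stream through its own conjugate beamformer $\textbf{v}_{i,1}=\tilde{\textbf{H}}_i^H/\|\tilde{\textbf{H}}_i\|$, a unit-norm vector that depends on the interferer-to-its-own-user channel $\tilde{\textbf{H}}_i$ and is therefore independent of the interfering channel $\textbf{H}_i$ seen at the test user. Writing $\textbf{H}_i=\textbf{h}_i^T$, we get $a^{(i)}=\textbf{h}_i^T\textbf{v}_{i,1}$, and because the isotropic complex Gaussian $\textbf{h}_i\sim\mathcal{CN}(0,\textbf{I})$ is rotationally invariant, its projection onto any fixed independent unit direction is again a standard $\mathcal{CN}(0,1)$ scalar. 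Consequently $\tilde{g}_i=|a^{(i)}|^2\sim\text{Gamma}(1,1)$, giving $m_i=1$.

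Finally I would record the exactness argument, which is the only conceptually delicate point. The statement $m_i\approx\mathcal{K}$ in Lemma~\ref{lemma:SDMA} is approximate solely because the $\mathcal{K}$ columns of each interferer's ZF-precoder are statistically dependent, a correlation neglected when treating $\tilde{g}_i$ as a sum of $\mathcal{K}$ independent contributions. With $\mathcal{K}=1$ there is a single precoding column per BS, so no inter-column correlation exists and the resulting law for $\tilde{g}_i$ is \emph{exact}, consistent with the ``Exact'' entry for MISO in Table~\ref{tab:summary}. I expect no substantive obstacle here; the lone subtlety is the rotational-invariance step, which must be stated carefully to justify that the single-direction projection is exactly, rather than approximately, $\mathcal{CN}(0,1)$, thereby upgrading the SDMA approximation to an exact result in the $\mathcal{K}=1$ specialization.
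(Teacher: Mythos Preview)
Your proposal is correct and follows the same route as the paper, namely specializing Lemma~\ref{lemma:SDMA} to $\mathcal{K}=1$ so that $m_o=N_t-\mathcal{K}+1=N_t$ and $m_i=\mathcal{K}=1$; the paper states this as an immediate substitution without a separate proof. Your additional verification that the ZF precoder collapses to conjugate beamforming and that the single-column case removes the inter-column correlation (hence ``Exact'' in Table~\ref{tab:summary}) is accurate and simply makes explicit what the paper leaves implicit.
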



\subsubsection{\textbf{Spatially Multiplexed MIMO (SM-MIMO) systems}} for the sake of completeness, we also consider a spatially multiplexed MIMO setup with optimum joint maximum likelihood receiver. This case is important because it represents the benchmark for ZF decoding.  Note that the analysis in this case is slightly different from the aforementioned schemes since joint detection is employed. Nevertheless, it can be represented via the equivalent SISO-SINR given in Proposition~\ref{prop}. Due to joint detection, no precoding/combining is applied such that $\textbf{W}_o=\textbf{V}_o=\textbf{V}_i=\textbf{I}_{N_t}$. To analyze this case, we define the error vector $\textbf{e}\left( \textbf{s},\hat{\textbf{s}}\right)=\textbf{s}-\hat{\textbf{s}}$ as the distance between $\textbf{s}$ and $\hat{\textbf{s}}$ and hence we derive the APEP, which is then used to approximate the ASEP as shown in the following lemma.
\begin{lemma}{} 
\label{lemma:SM}
For a SM-MIMO transmission, the Gamma distribution parameter for the equivalent intended channel gains is given by $\boldsymbol{m_o=N_r}$, while for the equivalent interfering links is given by $\boldsymbol{m_i=N_t}$. Furthermore, the averaged PEP over the distance distribution of $r_o$ is
\small{
\begin{align}
 \text{APEP}(\|\textbf{e}\|)\approx 1- \frac{\Gamma\left(m_o+\frac{1}{2}\right)}{\Gamma \left(m_o \right)} \frac{2}{\pi} \int_{0}^{\infty} \int_{0}^{\infty} & 2 \pi \lambda_B x e^{- \pi \lambda_B x^{2}}\frac{1}{\sqrt{z}} e^{-z \left(1+  \frac{ m_o \mathcal{N}_o   x^{\eta}}{4  \|\textbf{e}\|^2 P} \right)} \:  _{1}F_{1} \left(1-m_o; \frac{3}{2}; z \right) \times \notag \\
&\mathcal{L}_{\mathcal{I} \vert x}\left( \frac{m_o z}{4 \|\textbf{e}\|^2  P x^{-\eta}} \right) \mathrm{d}x \mathrm{d}z.
\end{align} 
}\normalsize
Consequently, using the Nearest Neighbor approximation \cite{goldsmith}, where there are $M$ equiprobable symbols, then
\begin{align}
 \text{ASEP}\approx N_{\|\textbf{e}\|_\text{min}} \text{ APEP}\left(\|\textbf{e}\|_\text{min}\right),
\end{align}
where $N_{\|\textbf{e}\|_\text{min}}$ is the number of constellation points having the minimum Euclidean distance denoted by $\displaystyle\min_{\textbf{s},\hat{\textbf{s}}} \|\textbf{e}\left(\textbf{s},\hat{\textbf{s}}\right)\|$ among all possible pairs of transmitted symbols, and hence is a modulation-specific parameter.

\begin{proof}
See Appendix \ref{app:all_lemmas}.
\end{proof}
\end{lemma}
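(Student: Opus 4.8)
The plan is to derive the conditional pairwise error probability (PEP) for the joint ML receiver, recognize it as an instance of the equivalent SISO-SINR of Proposition~\ref{prop}, average it along the same route used for Theorem~\ref{theo:unified}, and then invoke the nearest-neighbor bound. First I would specialize \eqref{eq:mimo_rxed_signal1} to $\textbf{W}_o=\textbf{V}_o=\textbf{V}_i=\textbf{I}_{N_t}$ with $L=N_t$, so that $\textbf{y}=\sqrt{P r_o^{-\eta}}\,\textbf{H}_o\textbf{s}+\sum_{r_i\in\tilde{\Psi}^o}\sqrt{P r_i^{-\eta}}\,\textbf{H}_i\textbf{s}_i+\textbf{n}$. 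For a fixed transmitted/decoded pair $(\textbf{s},\hat{\textbf{s}})$ with error vector $\textbf{e}=\textbf{s}-\hat{\textbf{s}}$, the ML decision between these two hypotheses compares the received vector to the noise-free points $\sqrt{P r_o^{-\eta}}\,\textbf{H}_o\textbf{s}$ and $\sqrt{P r_o^{-\eta}}\,\textbf{H}_o\hat{\textbf{s}}$, whose separation is $\sqrt{P r_o^{-\eta}}\,\|\textbf{H}_o\textbf{e}\|$.

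Under the Gaussian-signaling approximation the aggregate interference-plus-noise is conditionally Gaussian, so---exactly as in the passage leading to \eqref{eq:mimo_rxed_signal2}---the pairwise error event collapses onto the one-dimensional projection along $\textbf{d}=\textbf{H}_o\textbf{e}/\|\textbf{H}_o\textbf{e}\|$, and the conditional PEP equals (up to the standard AWGN prefactor of \cite{slim_comm_book}) $\text{erfc}$ of the square root of an effective SINR. I would then read off the two equivalent gains of Proposition~\ref{prop}. Since $\textbf{H}_o\textbf{e}\sim\mathcal{CN}(0,\|\textbf{e}\|^2\textbf{I}_{N_r})$, the normalized intended gain $\tilde{g}_o=\|\textbf{H}_o\textbf{e}\|^2/\|\textbf{e}\|^2$ is a sum of $N_r$ i.i.d. unit-mean exponentials, i.e. $\tilde{g}_o\sim\text{Gamma}(N_r,1)$, giving $m_o=N_r$. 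For the interference, $\textbf{d}$ is a unit vector independent of each $\textbf{H}_i$, so $\textbf{d}^H\textbf{H}_i$ has i.i.d. $\mathcal{CN}(0,1)$ entries and the projected per-BS interference power $\tilde{g}_i=\|\textbf{d}^H\textbf{H}_i\|^2=\sum_{k=1}^{N_t}|a^{(i)}_{l,k}|^2$ is $\text{Gamma}(N_t,1)$, giving $m_i=N_t$. This casts the conditional PEP exactly into the SISO-SINR form \eqref{instant}, with the $M$-QAM constant $\beta$ replaced by an effective constant proportional to $\|\textbf{e}\|^2$.

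With the conditional PEP in SISO form, the remaining averaging is mechanical: I would decondition over $\tilde{g}_o$ (Gamma), over the aggregate interference through the Laplace transform of Lemma~\ref{the_LT}, and over $r_o$ via $f_{r_o}(r)$, reproducing the same manipulation that yields the $w_1$-term of \eqref{eq:asep_general} in the proof of Theorem~\ref{theo:unified} (Appendix~\ref{app:outage}); the only difference is that $\beta$ is carried as the $\|\textbf{e}\|^2$-dependent effective constant, which is precisely what produces the stated $\text{APEP}(\|\textbf{e}\|)$ expression. I would then close with the nearest-neighbor (union-bound) approximation \cite{goldsmith}: since the dominant error events are those at minimum distance, summing the $N_{\|\textbf{e}\|_\text{min}}$ minimum-distance pairs gives $\text{ASEP}\approx N_{\|\textbf{e}\|_\text{min}}\,\text{APEP}(\|\textbf{e}\|_\text{min})$.

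The main obstacle is justifying that the projected interference gain $\tilde{g}_i$ is $\text{Gamma}(N_t,1)$ even though the projection direction $\textbf{d}$ is itself a function of $\textbf{H}_o$ and $\textbf{e}$. The saving observation is that $\textbf{d}$ is independent of the interfering channels $\{\textbf{H}_i\}$, so conditioning on $\textbf{d}$ leaves $\textbf{d}^H\textbf{H}_i$ isotropic Gaussian; this is exactly where the analysis becomes an approximation, because the equivalent SISO model collapses the genuinely non-white aggregate interference across the $N_r$ receive dimensions onto a single scalar variance, discarding the correlation that a true interference-whitening ML receiver would exploit---hence the ``Approx.'' entry in Table~\ref{tab:summary}. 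The nearest-neighbor truncation is a second, standard source of approximation.
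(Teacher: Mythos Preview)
Your proposal is correct and follows essentially the same route as the paper: set up the ML pairwise decision, project onto $\textbf{d}=\textbf{H}_o\textbf{e}/\|\textbf{H}_o\textbf{e}\|$ to obtain a conditional $\tfrac{1}{2}\text{erfc}(\cdot)$ with the SISO-SINR form of \eqref{instant}, read off $m_o=N_r$ from $\|\textbf{H}_o\textbf{e}\|^2/\|\textbf{e}\|^2$ and $m_i=N_t$ from $\|\textbf{d}^H\textbf{H}_i\|^2$, then reuse the $w_1$-branch of the Theorem~\ref{theo:unified} averaging with $\beta$ replaced by the $\|\textbf{e}\|^2$-dependent constant, and close with the nearest-neighbor bound. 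The paper does exactly this (writing the decision statistic as $[\textbf{I}_{\text{agg}}^H+\textbf{n}^H]\textbf{H}_o\textbf{e}+\textbf{e}^H\textbf{H}_o^H[\textbf{I}_{\text{agg}}+\textbf{n}]$ and computing its conditional Gaussian variance), arriving at the identical $a^{(i)}_{l,k}=(\textbf{H}_o\textbf{e})^H\textbf{h}_{i,k}/\|\textbf{H}_o\textbf{e}\|$ and the same conditional PEP.
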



\section{Numerical and Simulation Results}
\label{sec:sim_results}

In this section, we verify the validity and accuracy of the proposed unified model and discuss the potential of such unified framework for designing cellular networks. 
Unless otherwise stated, the simulations setup is as follows. The BSs transmit powers ($P$) vary while $\mathcal{N}_o$ is kept constant to vary the transmit SNR, the path-loss exponent $\eta=4$, the noise power $\mathcal{N}_o=-90$ dBm, the BSs intensity $\lambda_B=10\; \text{BSs}/\text{km}^2$, $\lambda_u=20\; \text{users}/\text{km}^2$. The desired symbols are modulated using square quadrature amplitude modulation (QAM) scheme, with a constellation size $M$.
 
\subsection{Proposed model validation}

We validate the derived ASEP for the proposed model via Monte-Carlo simulations, in Fig. \ref{fig:all_validation_ASEP}, for the network setup detailed as: activity factor $p=1$ and $(a)$ SIMO:  $N_t=1$ and $N_r=3$, $(b)$ the OSTBC: we consider a $2 \times 2$ Alamouti code, $(c)$ zero-forcing with ML receiver with $N_t=2, N_r=5$, $(d)$ the SDMA multi-user setting using a zero-forcing precoder at the transmitters side with $\mathcal{K}=3$ single-antenna users to be served by BSs equipped with $N_t=5$ antennas. The figure further verifies the accuracy of the Gaussian signaling approximation and the developed ASEP model, in which the analytic ASEP expressions perfectly match the simulations. {Figs. \ref{fig:outageReTx_comp} and \ref{fig:increm_mo} validate Theorem 1 and Theorem 2 for the outage before and after retransmission against Monte-Carlo simulations.  Fig. \ref{fig:outageReTx_comp} shows the time diversity loss due to interference temporal correlation when compared to the independent interference scenario.} The figure shows that assuming independent interference across time is too optimistic, while considering the interference correlation degrades the retransmission performance. Thus, it is possible for the network operators to exploit more diversity in the second transmission to enhance the retransmission performance. Fig. \ref{fig:increm_mo}, shows the effect of incremental diversity in the second transmission on the outage performance for $m_o=2$ and {$m_i=2$}. The figure shows that adjusting the MIMO configuration such that $m_o =5$ in the second transmission compensates for the temporal correlation effect and achieves the same performance as independent transmission (e.g., up to $3$ dB SIR improvement can be achieved).

\normalsize
\begin{figure*}[t!]
    \centering
   \begin{subfigure}[t]{0.31\textwidth}
  \centerline{\includegraphics[width=  1.9in]{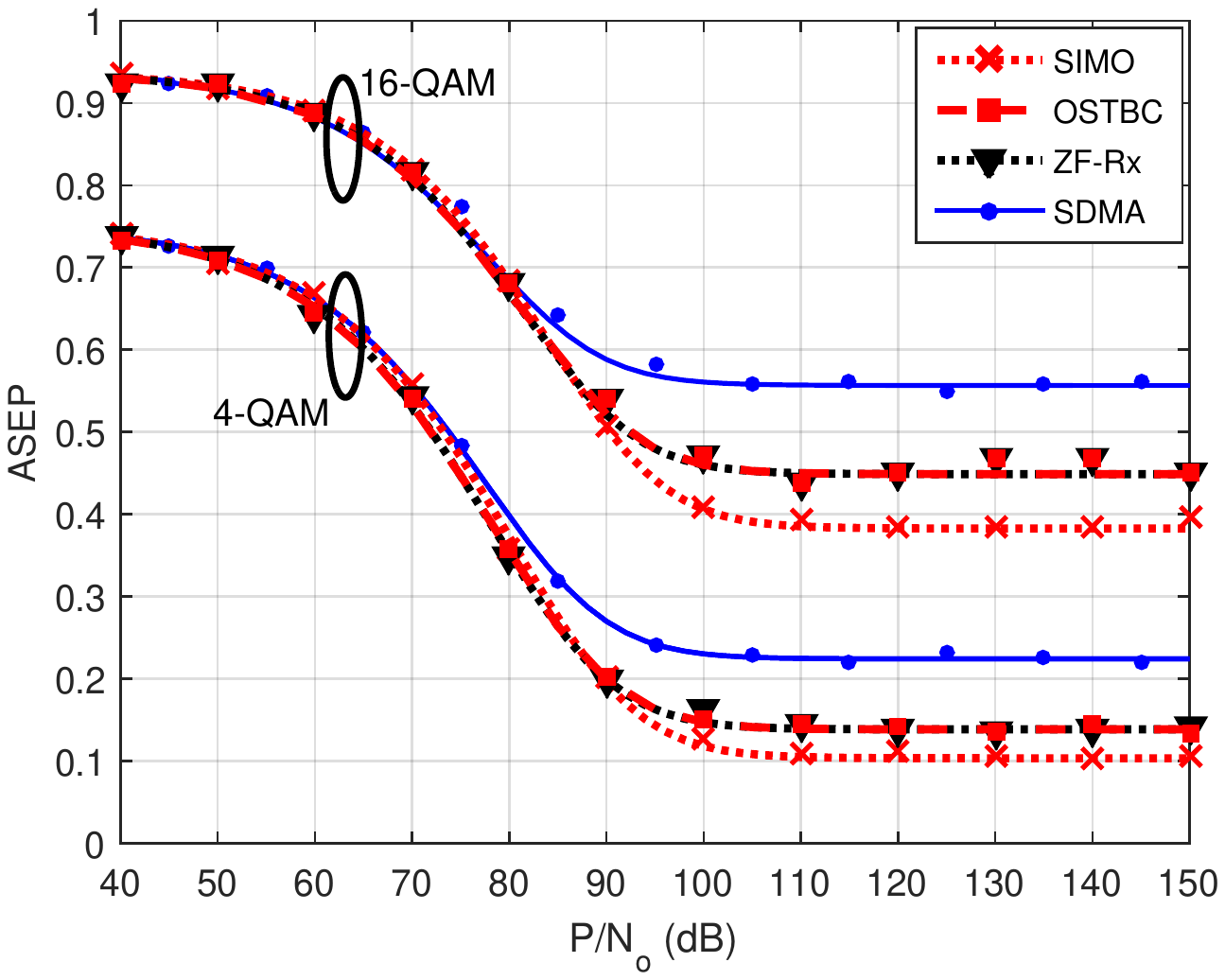}}
      \caption{\,ASEP against $\frac{P}{\mathcal{N}_o}$.}
\label{fig:all_validation_ASEP}
    \end{subfigure}
~
    \begin{subfigure}[t]{0.31\textwidth}
  \centerline{\includegraphics[width=  1.9in]{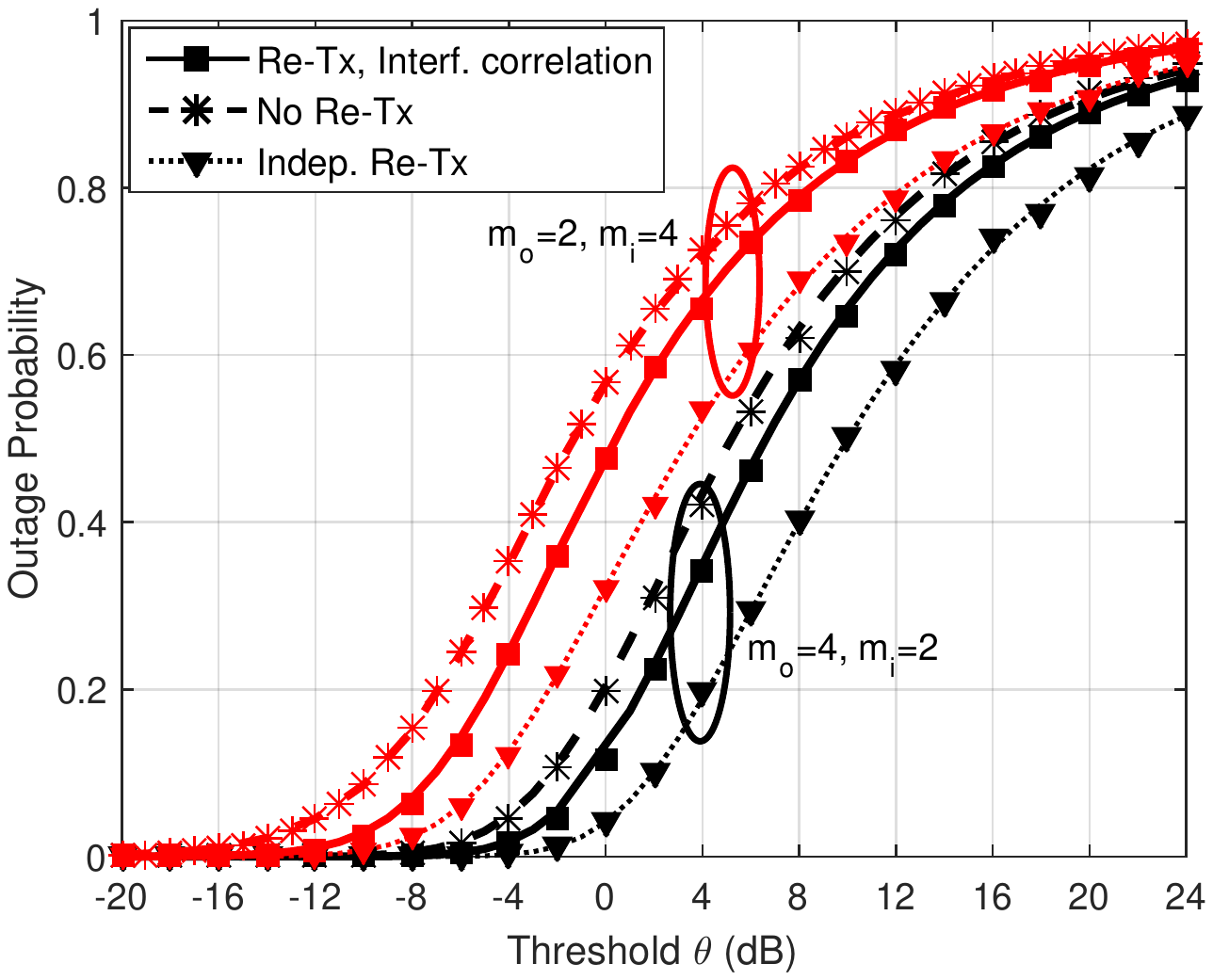}}
      \caption{\,The effect of interference correlation for different $m_o$ and $m_i$.}
\label{fig:outageReTx_comp}
    \end{subfigure}
    ~ 
    \begin{subfigure}[t]{0.31\textwidth}
       \centerline{\includegraphics[width=  1.9in]{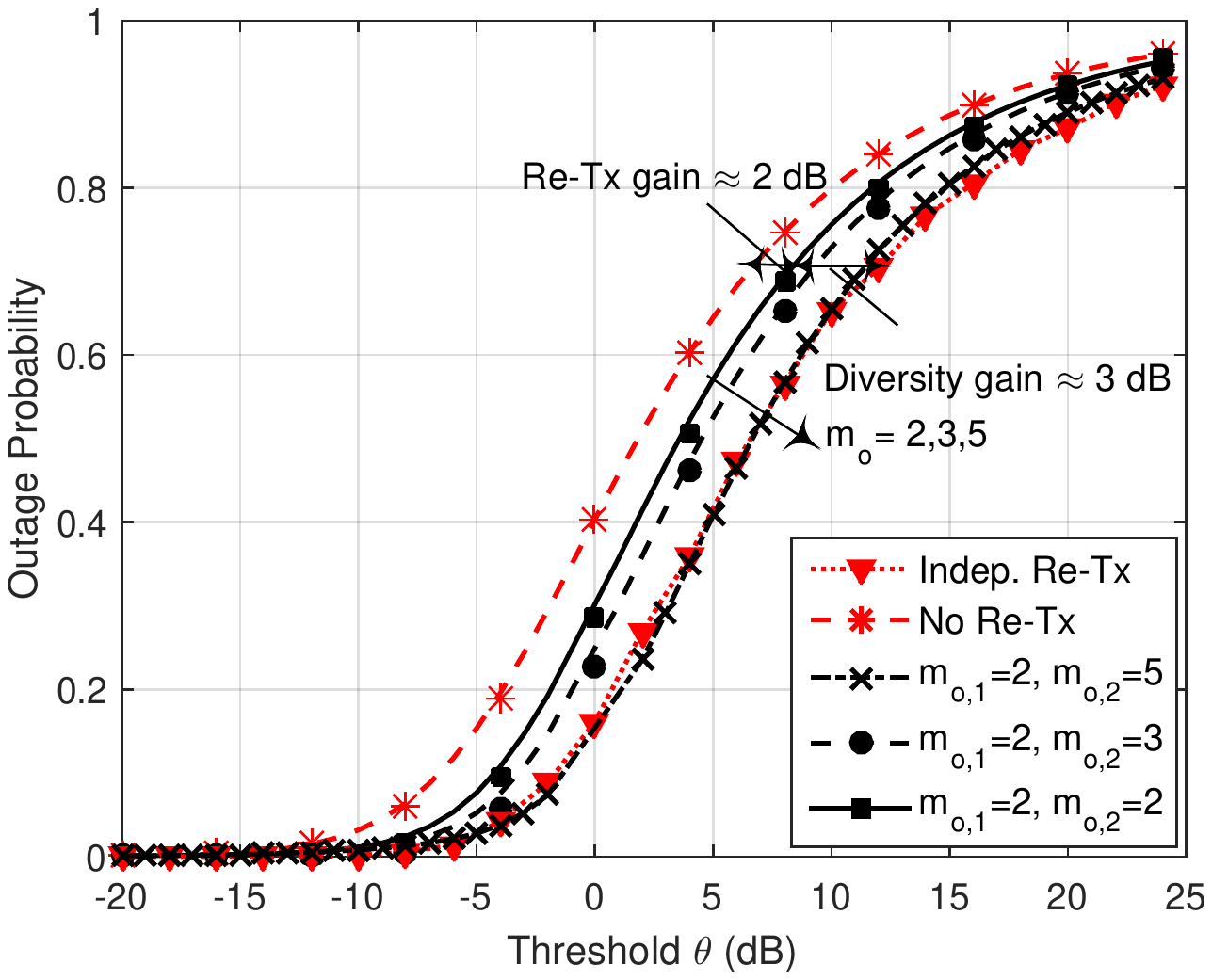}}\caption{\,Incremental diversity for the same inter-cell interference.}
\label{fig:increm_mo}
    \end{subfigure}
    \caption{ASEP and Outage probability performance validation. Lines represent the proposed analysis and markers represent Monte-Carlo simulations.}
    \label{retrans}
\end{figure*}




\begin{figure*}[t!]
    \centering
    \begin{subfigure}[t]{0.45\textwidth}
  \centerline{\includegraphics[width=  2.6 in]{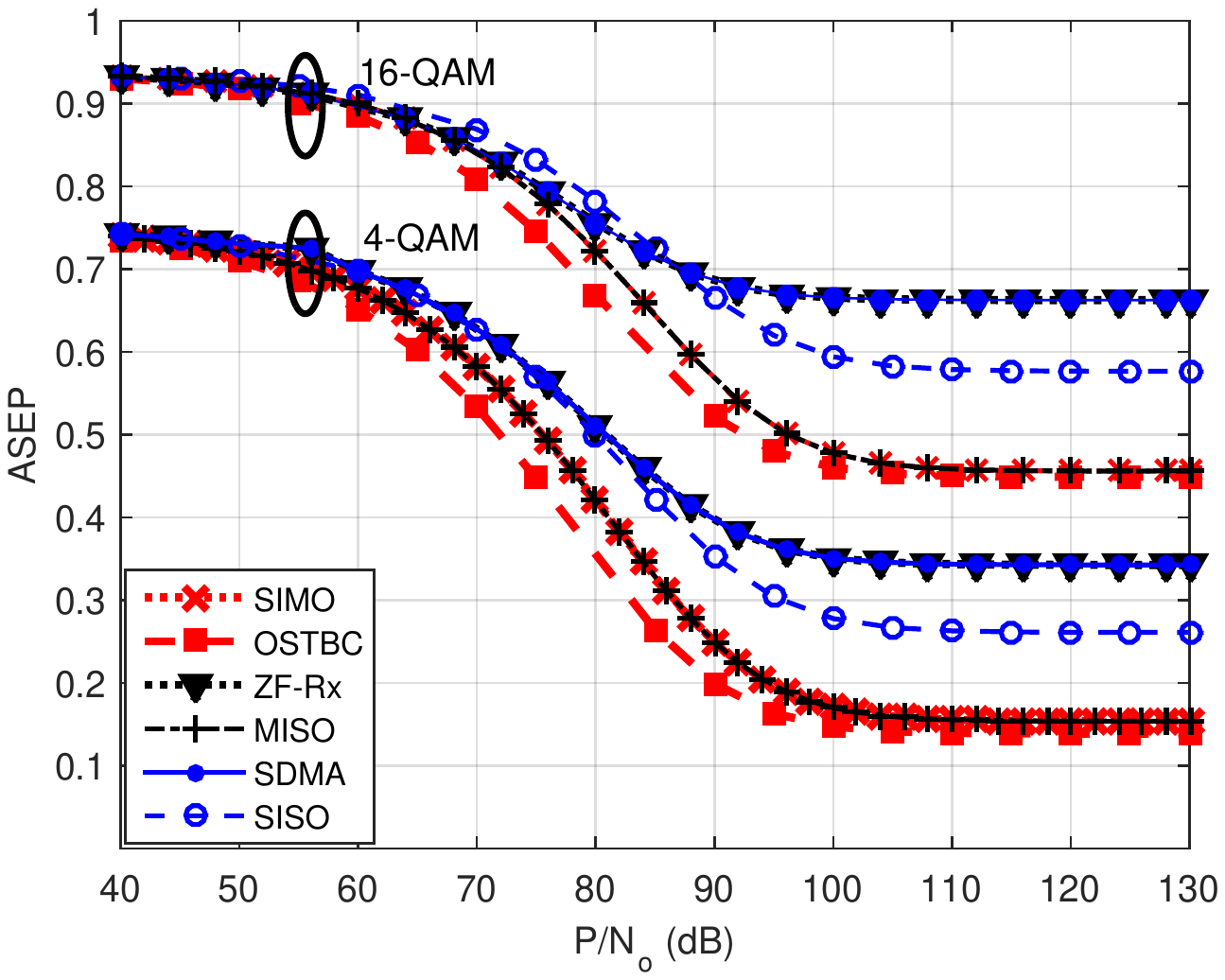}}
      \caption{\,ASEP for 4-QAM and 16-QAM.}
\label{fig:asep_Nr2_Nt2}
    \end{subfigure}
    ~ 
    \begin{subfigure}[t]{0.45\textwidth}
       \centerline{\includegraphics[width=  2.6 in]{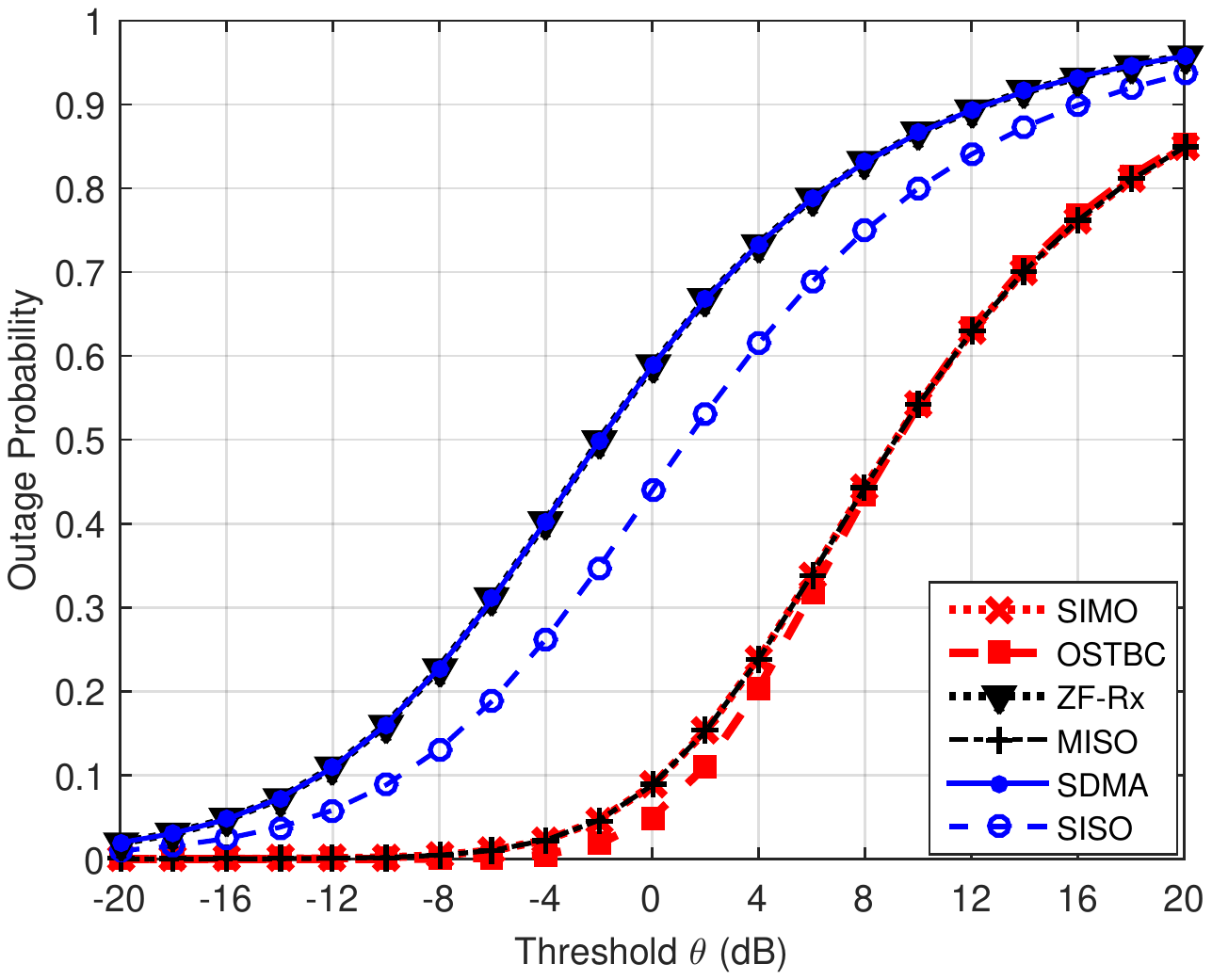}}\caption{\,Outage Probability versus SIR threshold $\theta$.}
\label{fig:out_Nr2_Nt2}
    \end{subfigure}

    \caption{ASEP and Outage probability performance for the different MIMO setups using the same number of antennas $N_t=2$ and $N_r=2$, at $p=1$.}
\label{fig:comparison_Nt2_Nr2}
\end{figure*}

\normalsize
\begin{figure*}[t!]
    \centering

    \begin{subfigure}[t]{0.31\textwidth}
       \centerline{\includegraphics[width=  1.9in]{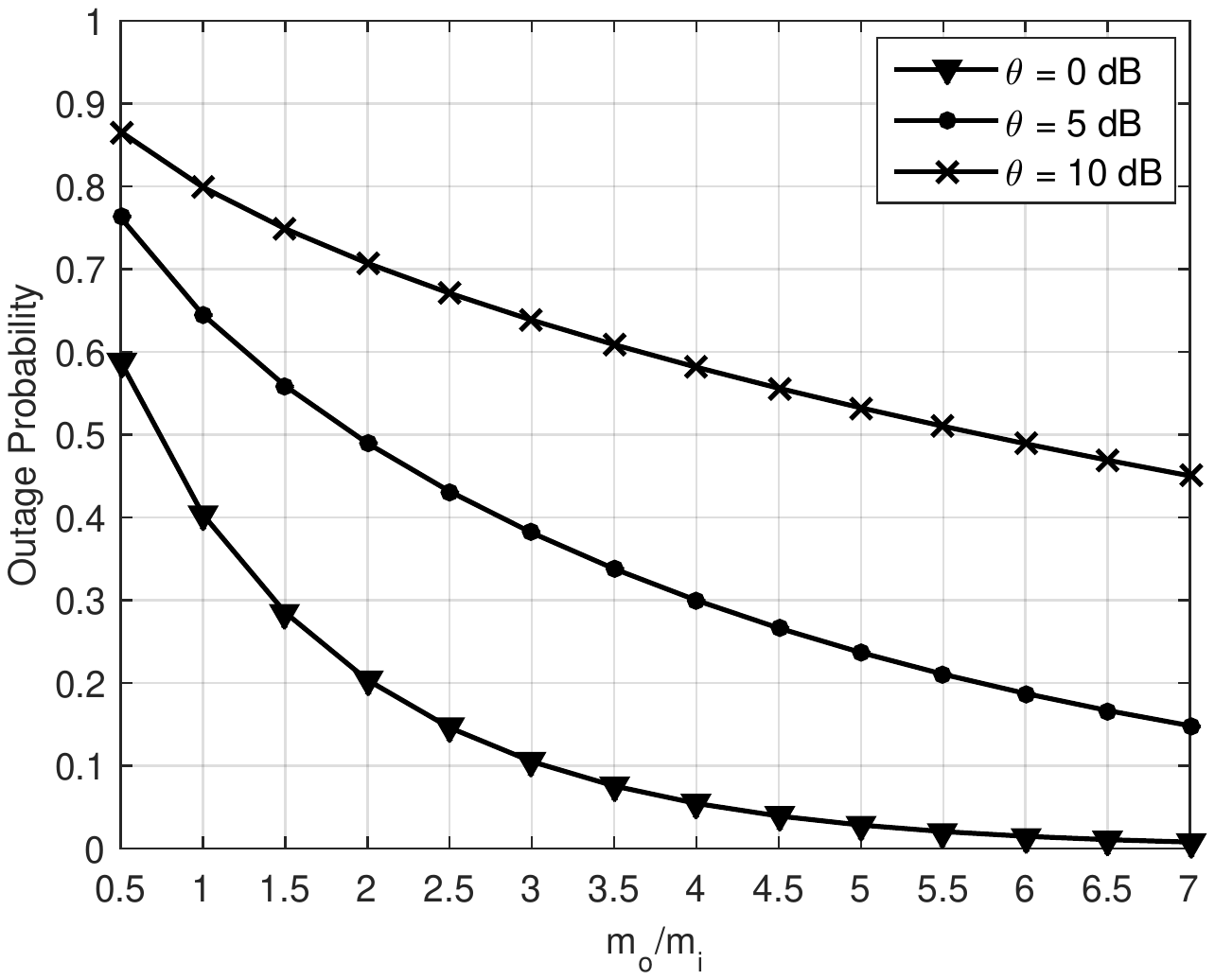}}\caption{\,Outage Probability for $\theta=0, 5, 10$ dB.}
\label{fig:out_unified}
    \end{subfigure}
~ 
 \begin{subfigure}[t]{0.31 \textwidth}
  \centerline{\includegraphics[width=  1.9in]{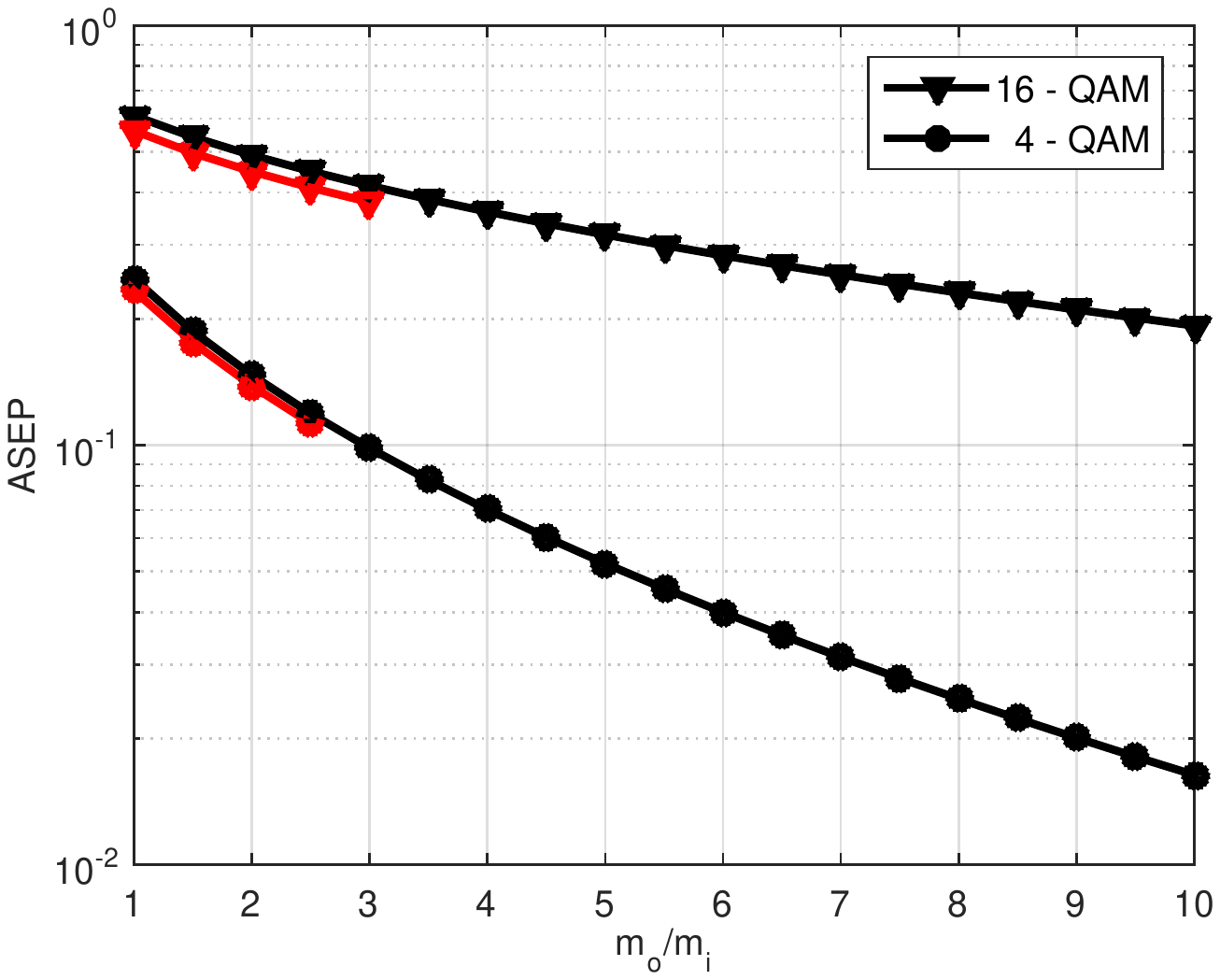}}
      \caption{\,ASEP for 4-QAM and 16-QAM.}
\label{fig:asep_unified}
    \end{subfigure}
  ~ 
    \begin{subfigure}[t]{0.31\textwidth}
       \centerline{\includegraphics[width=  1.9in]{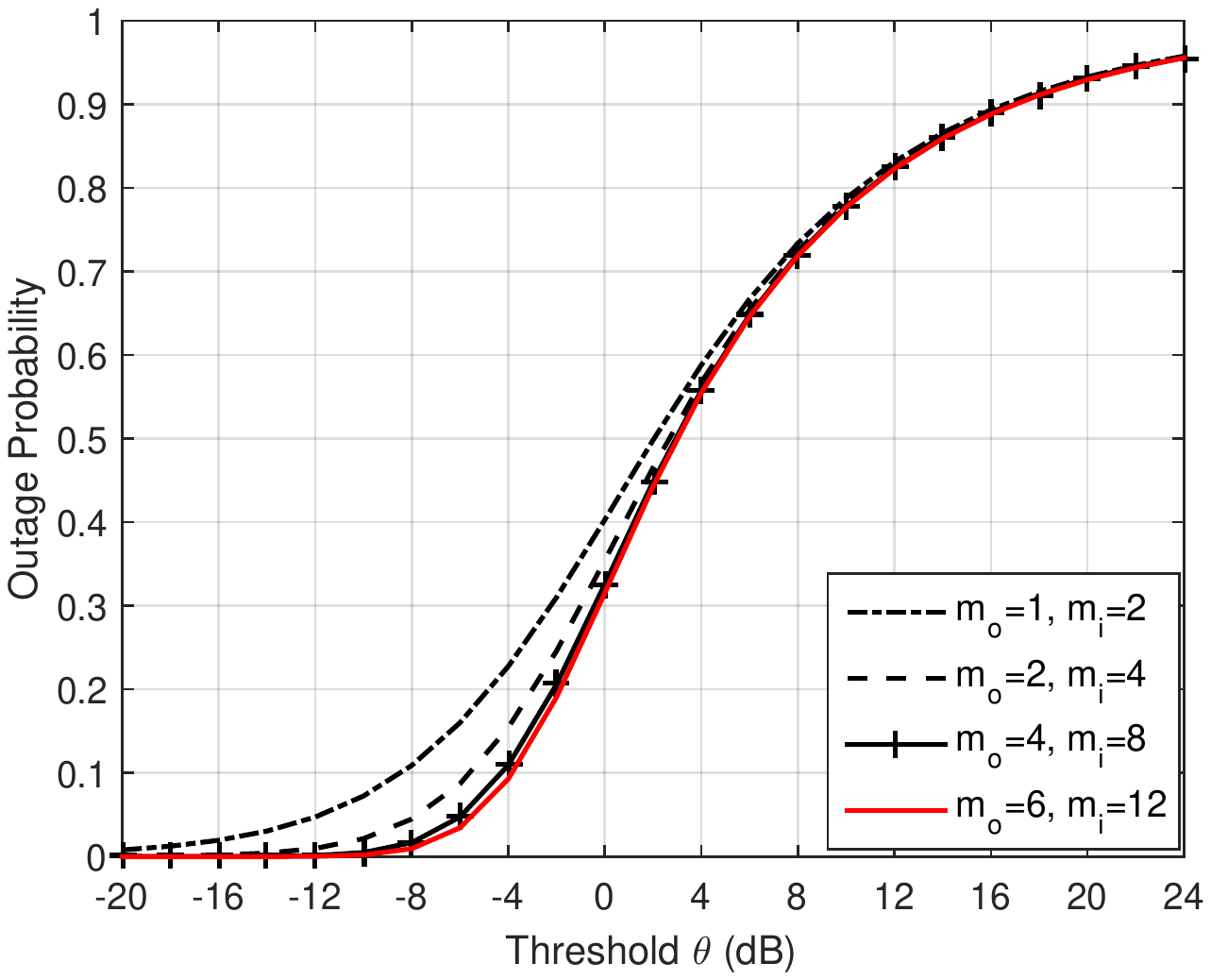}}\caption{\,Outage Probability for $\frac{m_o}{m_i}=\frac{1}{2}$ at $\theta=5$ dB.}
\label{fig:ch_harden_effect}
    \end{subfigure}

    \caption{Unified performance versus the ratio $\frac{m_o}{m_i}$ for an arbitrary MIMO setup.}
    \label{combined}
\end{figure*}

\subsection{{Diversity-Multiplexing Tradeoffs \& Design Guidelines}}

For a fixed number of antennas $N_t=2$ and $N_r=2$, Fig. \ref{fig:comparison_Nt2_Nr2} and Table \ref{tab:rates} compare the performance of the considered MIMO configurations in terms of error probability, outage probability, ergodic rate, and throughput\footnote{ The throughput is defined  as the number of successfully transmitted bits pcu and is given by $\log_2(M) (1-{\rm ASEP})$.}, and quantify the achievable gains with respect to the SISO configuration. Note that, for SIMO and MISO, $N_t$ and $N_r$ are set to $1$, respectively.   Further, in SDMA scenario, the number of single-antenna users served in the network is $\mathcal{K}=2$.  Fig.~\ref{fig:comparison_Nt2_Nr2} and Table \ref{tab:rates} clearly show the diversity-multiplexing tradeoff in cellular networks. The figure shows the outage probability improvement due to diversity, in which the OSTBC achieves the highest outage probability reduction. This is because OSTBC provides both transmit and receive diversity while MISO and SIMO provide either transmit or receive diversity.  Note that despite that MISO and SIMO have the same performance, the SIMO is preferred because it relies on the receive CSI which is easier to obtain than the transmit CSI. {The figure also shows the negative impact of multiplexing on the per-stream ASEP and outage probability in ZF-Rx and MU-MIMO schemes. However, multiplexing several streams per BS improves the overall ergodic rate and per-cell throughput as shown in Table \ref{tab:rates}. }

\begin{table}[h!]
\begin{center}
\begin{tabular}{|c| c| c| c | c c|  }

\hline 

 \textbf{MIMO Setup} &  $\boldsymbol{m_o}$  & $\boldsymbol{m_i}$ & Ergodic Rate & \multicolumn{2}{ |c| } {No. of bits pcu } \\ 

 & & &  (bits/sec/Hz)&  $4$-QAM & $16$-QAM  \\
\hline \hline 

 SIMO & $2$ & $1$ & 2.9523  & $ 1.6926$ &  $2.1772$
 
  \\  \hline

  OSTBC &  $4$ & $2$   & $  {2.9771} $  &  $ {1.7228}$  & $  {2.2044}$
 \\  
\hline 
  ZF-Rx &  $1$ & $2$ &  $3.1644 $ & $ 2.6300$   &  $2.7008$
 \\  
\hline
 SDMA &   $1$ & $2$     & $3.1644 $  & $    2.6300 $ &  $2.7008$
\\ 

\hline
MISO &   $2$ & $1$     & $ 2.9523 $  & $   1.6926$ & $2.1772$
\\ 
\hline

SISO &   $1$ & $1$     & $1.48899 $  & $   1.4780$ & $1.6936$
\\ 
 
\hline
\end{tabular}

\end{center}
\caption{Overall achievable and actual rates gains per cell, with respect to SISO networks, for the different MIMO setups, in an interference-limited scenario for $M=4,16$-QAM modulation scheme.}
\label{tab:rates}
\end{table}

{The results in Fig.~\ref{fig:comparison_Nt2_Nr2} and Table \ref{tab:rates} show the diversity-multiplexing tradeoffs that can be achieved for a $2 \times 2$ MIMO setting. However, as $N_t$ and $N_r$ grow, several diversity and multiplexing tradeoffs are no longer straightforward to compare. Hence, it is beneficial to have a unified methodology to select the appropriate diversity, multiplexing, and number of antennas to meet a certain design objective. From Proposition~\ref{prop} and the subsequent results we noticed two important insights:}
\begin{itemize}
\item The performances of MIMO schemes differ according to their relative $m_o$ and $m_i$ values. In other words, MIMO configurations with equal  $\frac{m_o}{m_i}$ ratio have equivalent per-stream performance. 
\item Multiplexing more data streams increases $m_i$ and does not affect $m_o$. On the other hand, diversity increases $m_o$ and does not affect $m_i$. In other words, $m_o$ represents the diversity gain and $m_i$ represents the number of independently multiplexed data streams per BS (i.e., multiplexing gain).
\end{itemize}
Based on there aforementioned insights, we plot the unified MIMO outage probability and ASEP performance results in Fig.~\ref{combined}. The Figs.~\ref{fig:out_unified} and \ref{fig:asep_unified} show the ASEP and outage probability for a varying ratio of $\frac{m_o}{m_i}$  which can be used for all considered MIMO schemes. Conversely, Fig.~\ref{combined} presents a unified design methodology for MIMO cellular networks as shown in Fig.~\ref{fig:flowchart}. Such unified design provides reliable guidelines for network designers and defines the different flavors of the considered MIMO configurations in terms of achievable diversity and/or multiplexing gains. For instance, for an ASEP or outage probability constraint, the corresponding ratio $\frac{m_o}{m_i}$ and modulation scheme are determined. Then, the network designer can determine the MIMO technique depending on the number of data streams (or number of users) that need to be simultaneously served (i.e., determine $L$ or $\mathcal{K}$). Finally, the number of transmit and receive antennas for the selected MIMO scheme can be determined from Table~\ref{tab:summary}. 

Figs. \ref{fig:out_unified} and \ref{fig:asep_unified} clearly show that incrementing the ratio $\frac{m_o}{m_i}$ enhances the diversity gain whereas decrementing it provides a higher multiplexing gain. That is, network designers are able to maintain the same per-stream ASEP/outage probability by appropriately adjusting the operational parameters, namely, $N_r$, $N_t$ and $L$ (or $\mathcal{K}$ for SDMA). This is done by compensating $m_i$ with the adequate $m_o$ such that  $\frac{m_o}{m_i}$  is kept constant.  For instance, consider a network that needs to increase the number of served users $\mathcal{K}$ without compromising the reliability performance  of each served user.  According to Table~\ref{tab:summary} and Fig.~\ref{combined},  this is achieved by keeping $\frac{m_o}{m_i} = c$, where $c$ is a constant, which hence costs the network additional $\left \lceil \mathcal{K}(c+1)-1  \right \rceil $ transmit antennas per BS.

It is worth mentioning that a design based on the ASEP is more tangible as it is sensitive to the used constellation, as opposed to the outage probability as shown in Fig.~\ref{fig:comparison_Nt2_Nr2}, and Table~\ref{tab:rates}. 
It is also important to note that increasing $m_o$ for a fixed  $\frac{m_o}{m_i}$ ratio can slightly vary the outage probability due to the channel hardening effect as shown in Fig~\ref{fig:ch_harden_effect}. However, such variation is shown to be negligible for $m_o>2$. Another noteworthy observation is that the second term in \eqref{eq:asep_general}, which corresponds to the  $\text{erfc}^2 (\cdot)$ term in \eqref{assep}, requires threefold nested integrals that involve hypergeometric functions to evaluate the ASEP. Such integration is computationally complex to evaluate and may impose some numerical instability specially for large arguments of $m_o$ and $m_i$. In order to overcome such complexity and numerical instability, we invoke Jensen's inequality to the  $\text{erfc}^2 (\cdot)$  term in \eqref{assep}. Hence, the ASEP function becomes $\text{ASEP}\left(\Upsilon\right) \approx {w_{1} }  \mathbb{E}\left[\text{erfc}\left(\sqrt{\beta \Upsilon } \right)\right] + {w_{2} }  \mathbb{E}\left[\text{erfc} \left(\sqrt{\beta \Upsilon} \right)\right]^2 $, which reduces one integral from the second term of \eqref{eq:asep_general}. Using Jensen's inequality yields a stable and accurate approximation compared to \eqref{eq:asep_general} as shown in Figure \ref{fig:asep_unified}, where the red curves represent the numerically unstable ASEP performance as arguments grow, while the black curves represent the Jensen's inequality bounds.

\begin{figure}[t]
 \begin{center} 
{\includegraphics[width=0.75 \columnwidth]{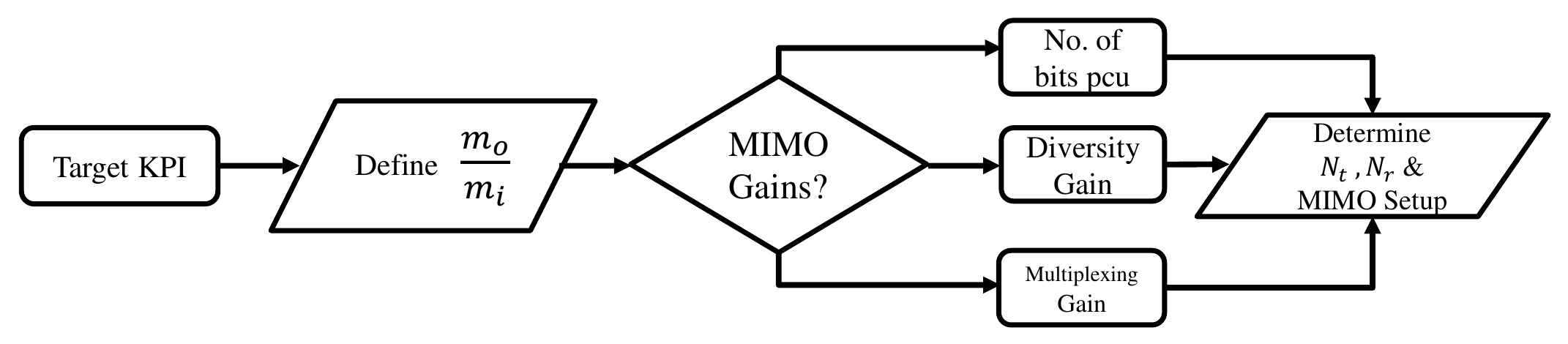}}
\caption{{Flowchart for MIMO selection in cellular networks}}
 \label{fig:flowchart}
 \end{center}
\end{figure}





\section{Conclusion}
\label{sec:conc}
{This paper provides a unified tractable framework for studying symbol error probability, outage probability, ergodic rate, and throughput for downlink cellular networks with different MIMO configurations. The developed model also captures the effect of temporal interference correlation on the outage probability after signal retransmission. The unified analysis is achieved by Gaussian signaling approximation along with an equivalent SISO-SINR representation for the considered MIMO schemes. The accuracy of the proposed model is verified against Monte-Carlo simulations. To this end, we shed lights on the diversity loss due to temporal interference correlation and discuss the diversity-multiplexing tradeoff imposed by MIMO configurations. Finally, we propose a unified design methodology to choose the appropriate diversity, multiplexing, and number of antennas to meet a certain design objective.} 

\appendices

\section{Proof of Theorem~\ref{theo:unified}}
\label{app:outage}

The ASEP expression in \eqref{eq:asep_general} is obtained by taking the expectation over $\Upsilon$ and then using expressions from \cite[eq. (11), (21)]{unif_model} as has been detailed in\cite{Laila_letter}.

For the outage probability, conditioned on $r_o$,
\small{\begin{align}
\!\!\!\! \!\!\!\! \mathcal{O}\left(r_o, \theta\right) &=\mathbb{E}\left[  \mathbb{P}\left( g_o <  \frac{  \theta \mathcal{I} }{P r_o^{-\eta} }  \right)\right] \overset{(d)}{=} \mathbb{E}  \left[ 1- \sum_{j=0}^{m_o-1} \frac{1}{j !} \left( \frac{  \theta \mathcal{I} }{ P r_o^{-\eta} } \right)^j \exp{\left\{ \frac{ - \theta \mathcal{I} }{ P r_o^{-\eta} } \right\}} \right], 
\end{align}
}\normalsize
 where $(d)$ follows from the CDF of the gamma distribution, and then \eqref{out} is obtained from the rules of differentiation of the LT, together with averaging over the PDF of $r_o$.

Ergodic rate expression in \eqref{rate} follows from \cite[Lemma~1]{khairi_useful_lemma}, and by exploiting the independence between the useful and interfering signals, as well as incorporating the CDF of the gamma random variable.

\section{Proof of Lemma \ref{joint_LT}}
\label{app:joint_LT1}

Let $\tilde{\Psi}_{1}^o \subset \tilde{\Psi}^o$ and $\tilde{\Psi}_{2}^o \subset \tilde{\Psi}^o$ be the sets of interfering BSs in the first and second time slots of transmissions, respectively.  Exploiting the independent transmission assumption per time slot,  $\tilde{\Psi}_{1}^o $ and $\tilde{\Psi}_{2}^o $ can be decomposed into the three independent PPPs $\{\tilde{\Psi}_{1}^o \setminus \tilde{\Psi}_{2}^o \}$,  $\{\tilde{\Psi}_{2}^o \setminus \tilde{\Psi}_{1}^o \}$, and $\{ \tilde{\Psi}_{2}^o \cap \tilde{\Psi}_{1}^o \} $ with intensities $p (1-p)  \lambda_B$, $(1-p)p  \lambda_B$, and  $p^2 \lambda_B$, respectively. Substituting $p \lambda_B = \lambda$, the joint LT of the two random variables $\mathcal{I}_1$, and $\mathcal{I}_2$ is derived as follows,

\footnotesize{
\begin{align}
&\mathcal{L}_{\mathcal{I}_1, \mathcal{I}_2  \vert r_o}\left(z_1,z_2 \right) 
=\mathbb{E} \left[\exp{\left\{ - \displaystyle \sum_{r_i \in \tilde{\Psi}_{2}^o \cap \tilde{\Psi}_{1}^o }  P r_i^{-\eta} \left( z_1 \tilde{g}_i^{(1)} + z_2 \tilde{g}_i^{(2)}\right)     - z_1  \displaystyle \sum_{r_i \in \tilde{\Psi}_1^o \setminus  \tilde{\Psi}_2^o } P r_i^{-\eta} \tilde{g}_i^{(1)} - z_2  \displaystyle \sum_{r_i \in \tilde{\Psi}_2^o \setminus  \tilde{\Psi}_1^o } P r_i^{-\eta} \tilde{g}_i^{(2)}\right\} } \right] \notag \\
\!\!\!\!& \overset{(e)}{=} \exp\left\{ -2 \pi p \lambda \int_{r_o}^{\infty} \!\!\!\! \mathbb{E}_{\tilde{g}_i^{(1)},\tilde{g}_i^{(2)}} \left[1- e^{- P x^{-\eta }\left(z_1 \tilde{g}_i^{(1)} + z_2 \tilde{g}_i^{(2)}\right) } \right] x \mathrm{d}x  -   2 \pi (1- p) \lambda \int_{r_o}^{\infty} \!\!\!\! \mathbb{E}_{\tilde{g}_i^{(1)}} \left[1- e^{- P x^{-\eta } z_1 \tilde{g}_i^{(1)} } \right] x \mathrm{d}x\right.  \notag \\
&\left. \quad \quad\quad\quad\quad-   2 \pi (1- p) \lambda \int_{r_o}^{\infty} \!\!\!\! \mathbb{E}_{\tilde{g}_i^{(2)}} \left[1- e^{- P x^{-\eta } z_2 \tilde{g}_i^{(2)} } \right] x \mathrm{d}x\right\} \notag   \\
&  \overset{(f)}{=}  \exp\left\{ -2 \pi  \lambda\int_{r_o}^{\infty} \left( p \left[1- {\frac{1}{\left(1+P x^{-\eta } z_1\right)^{m_{i,1}}}\frac{1}{\left(1+P x^{-\eta } z_2\right)^{m_{i,2}}} } \right]+ (1-  p) \left[1- {\frac{1}{\left(1+P x^{-\eta } z_1\right)^{m_{i,1}}} } \right] \right.\right.  \notag \\
&\left. \left. \quad \quad\quad\quad\quad\quad\quad\quad\quad+ (1-  p) \left[1- {\frac{1}{\left(1+P x^{-\eta } z_2\right)^{m_{i,2}}} } \right]\right) x \mathrm{d}x\right\},
\end{align}
}\normalsize
where $(e)$ is obtained from the PGFL and exploiting the independence between the PPPs $\{\tilde{\Psi}_{1}^o \setminus \tilde{\Psi}_{2}^o \}$,  $\{\tilde{\Psi}_{2}^o \setminus \tilde{\Psi}_{1}^o \}$, and $\{ \tilde{\Psi}_{2}^o \cap \tilde{\Psi}_{1}^o \} $ \cite{martin_book}, and $(f)$ follows from the LT of the two independent gamma distributed random variables $\tilde{g}^{(1)}_i$ and $\tilde{g}^{(2)}_i$. Solving the integral completes the proof.

\section{Proof of Theorem \ref{theo:joint_LT}}
\label{app:joint_LT2}
The joint CCDF of $\bar{\Upsilon}_1$ and $\bar{\Upsilon}_2$ is given by
\footnotesize{
\begin{align}
\!\!\!\!\!\!\!\!\!\!\!\!\!\!\mathbb{P}\left( \bar{\Upsilon}_1 > \theta, \bar{\Upsilon}_2 > \theta\right)&=\mathbb{E}  \left[  \mathbb{P}\left( \tilde{g}_o^{(1)} >  \frac{  \theta \mathcal{I}_1 }{P r_o^{-\eta} }  , \:\:\:  \tilde{g}_o^{(2)} >  \frac{  \theta \mathcal{I}_2 }{P r_o^{-\eta} }   \right)\right] \notag \\
&\overset{(i)}{=} \mathbb{E}  \left[ \sum_{j_1=0}^{m_{o,1}-1}  \sum_{j_2=0}^{m_{o,2}-1} \frac{1}{j_1 ! j_2 !} \left( \frac{ - \theta }{ P r_o^{-\eta} } \right)^{j_1+j_2} \left( \mathcal{I}_1^{j_1} \mathcal{I}_2^{j_2} \right) \exp{\left\{ \frac{ - \theta \left(\mathcal{I}_1 + \mathcal{I}_2\right) }{ P r_o^{-\eta} } \right\}}   \right] \notag\\
&\overset{(ii)}{=} \mathbb{E}  \left[ \sum_{j_1=0}^{m_{o,1}-1}  \sum_{j_2=0}^{m_{o,2}-1} \frac{1}{j_1 ! j_2 !} \left( \frac{ - \theta }{ P r_o^{-\eta} } \right)^{j_1+j_2} \frac{\partial^{(j_1+j_2)}}{\partial z_1^{j_1} \partial z_2^{j_2}} \mathcal{L}_{\mathcal{I}_1, \mathcal{I}_2 \vert r_o}\left(z_1,z_2 \right) \bigg \vert_{z_1= z_2=\frac{\theta}{P r_o^{-\eta}}} \right], 
\end{align}
}\normalsize
such that $(i)$ follows from the independence of $ \tilde{g}_o^{(1)}$ and $\tilde{g}_o^{(2)}$ along with the CCDF of their Gamma distributions. $(ii)$ is obtained by utilizing the LT identity $ t_1^a t_2^b  f(t_1,t_2)\Leftrightarrow \frac{\partial^{(a+b)}}{\partial z_1^a \partial z_2^b } \mathcal{L}_{t_1,t_2}(z_1,z_2)$ , which can be proved as follows. First, we write the joint Laplace Transform of two variables $t_1$ and $t_2$ as
\begin{align}
\mathcal{L}_{t_1,t_2}\left(z_1,z_2\right) =\int_0^{\infty} \int_0^{\infty} {f(t_1,t_2)} e^{-z_1 t_1} e^{-z_2 t_2} \mathrm{d}t_1 \mathrm{d}t_2,
\end{align}
then,
\footnotesize{
\begin{align}
\frac{\partial^{j_1 + j_2}\mathcal{L}_{t_1,t_2}\left(z_1,z_2\right)}{\partial z_1^{j_1}\partial z_2^{j_2}}& = \frac{\partial^{j_1 + j_2}}{\partial z_1^{j_1}\partial z_2^{j_2}} \int_0^{\infty} \int_0^{\infty} {f(t_1,t_2)} e^{-z_1 t_1} e^{-z_2 t_2} \mathrm{d}t_1 \mathrm{d}t_2 \notag \\
& = \int_0^{\infty} \int_0^{\infty} \left( -1\right)^{j_1 + j_2}\left(  t_1 \right)^{j_1} \left(  t_2 \right)^{j_2} {f(t_1,t_2)} e^{-z_1 t_1} e^{-z_2 t_2} \mathrm{d}t_1 \mathrm{d}t_2,  
\end{align}
\normalsize
where the second equality follows  by Leibniz rule and applying the rules of partial differentiation, which proves the identity.





\section{}
\label{app:all_lemmas}
\subsection{Proof of Lemma~\ref{lemma:SIMO}}
\label{app:SIMO}
In SIMO transmission, by applying MRC at the receiver side, for $\bar{\textbf{w}}_o^T = \textbf{h}_o^{H}$, then the post-processed signal is written as
\small{\begin{align}
\tilde{{y}}&=\bar{\textbf{w}}_o^T \textbf{y}=  \sqrt{P} r_o^{-\frac{\eta}{2}} \| \textbf{h}_o\|^2  \:  {s}_o  +\sum_{r_i \in \tilde{\Psi}^o} \sqrt{P} r_i^{-\frac{\eta}{2}} \textbf{h}_o^{H} \textbf{h}_i  \tilde{s}_i   +  \textbf{h}_o^{H} \textbf{n}.
\label{eq:simo_eq}
\end{align}}\normalsize
We start with computing the effective noise variance since a post-processor is applied. The noise power is expressed as
\small{\begin{align}
\text{Var}_{\textbf{n}}\left[\textbf{h}_o^{H}\textbf{n}\right]=\mathcal{N}_o  \| \textbf{h}_o\|^2.
\end{align}}\normalsize
Therefore, the random variable $\epsilon=  \| \textbf{h}_o\|^2$, is used to normalize the resultant interference power. The effective interference variance conditioned on the network geometry and the intended channel gains with respect to $\tilde{s}_i$ is given by
\small{
\begin{align}
\mathcal{I}=\frac{1}{\epsilon}\text{Var}_{\tilde{s}_i}\left[\sum_{r_i \in \tilde{\Psi}^o} \sqrt{P}  r_i^{-\frac{\eta}{2}}  \:\textbf{h}_o^{H}\textbf{h}_i  \tilde{s}_i \right]=\sum_{r_i \in \tilde{\Psi}^o} P r_i^{-\eta} \frac{\vert \textbf{h}_o^{H} \textbf{h}_i\vert^2  }{  \| \textbf{h}_o\|^2 }.
\end{align} 
}\normalsize
By inspection of the interference variance, it is clear that $\left\{\breve{\textbf{{H}}}\right\}=\left\{\textbf{H}_o \right\}$. Also, we notice that there exists only one coefficient $ a^{(i)}_{l}= \frac{ \textbf{h}_o^{H}  \textbf{h}_i }{  \| \textbf{h}_o\| }$. Recall that the number of independent coefficients $ a^{(i)}_{l,k}$ depends on the number of independent transmitted streams, which is equal to one in the SIMO case.  Accordingly, $\tilde{g}_i= \big \vert  a^{(i)}_{l} \big \vert^2   \sim \text{Gamma}\left( m_i,1\right)$, with $m_i=1$. Similarly, conditioned on the intended and interfering channel gains, the received signal power, with respect to the transmitted signal, can be shown to be
\begin{align}
\mathcal{S}=\frac{1}{\epsilon}\text{Var}_{s_o}\left[ \sqrt{P} r_o^{-\frac{\eta}{2}} \| \textbf{h}_o\|^2   {s}_o \right]=P r_o^{-\eta}   \| \textbf{h}_o\|^2.
\end{align} 
Therefore, $\tilde{g}_o = \| \textbf{h}_o \|^{2} \sim \text{Gamma}\left( m_o,1\right)$ where $m_o=N_r$.

\subsection{Proof of Lemma~\ref{lemma:OSTBC}}
\label{app:OSTBC}
Employing OSTBC, the received vector at a typical user at time instant $\tau, N_t \leq\tau\leq {T}$, is given by
\small{\begin{align}
\textbf{y}\left( \tau \right)= \sqrt{P}r_o^{-\frac{\eta}{2}}{\textbf{H}_{o}} \textbf{s}+ \sum_{r_i \in \tilde{\Psi}^o} \sqrt{P} r_i^{-\frac{\eta}{2}} \textbf{H}_i \tilde{\textbf{s}}_i+ \textbf{n}\left(\tau \right).
\end{align}} \normalsize
 Let $\boldsymbol{\mathcal{Y}}$ be the stacked vector of received symbols over ${T}$ intervals, and let $L=N_t$, such that,
\small{\begin{align}
\boldsymbol{\mathcal{Y}} =  \sqrt{P} r_o^{-\frac{\eta}{2}}\textbf{H}_{\text{eff}}\: \textbf{s} +  \textbf{i}_{\text{agg}} + \textbf{n}.
\end{align} \normalsize
\noindent where $\boldsymbol{\mathcal{Y}} \in \mathbb{C}^{{T}\cdot N_r \times 1}$, and $ \textbf{i}_{\text{agg}}$ is the concatinated aggregate interference ${T}\cdot N_r \times 1$ vector. The effective channel matrix $\textbf{H}_{\text{eff}} \in \mathbb{C}^{{T}\cdot N_r \times N_t}$ is expressed as a linear combination of the set of dispersion matrices $\mathcal{A}$ and $\mathcal{B}$ chosen according to the adopted orthogonal space-time code as follows \cite{dispersion_babak_hassibi,larsson_book},
\small{\begin{align}
\label{eq:dispersion_matrix}
\textbf{H}_{\text{eff}}=\sum_{j=1} ^{N_r}\sum_{q=1}^{N_t} \alpha_{jq} \mathcal{A}_{jq}+ \jmath \beta_{jq} \mathcal{B}_{jq},
\end{align}\normalsize
\noindent where  $h_{jq}=\alpha_{jq} + \jmath \beta_{jq}$. Moreover,  $ \textbf{H}_{\text{eff}}^{H}\textbf{H}_{\text{eff}}=\| \textbf{H}_o\|^{2}_{\text{F}} \textbf{I}$ where $\| \textbf{H}_o\|^{2}_{\text{F}}=\sum_{j=1}^{N_r}\sum_{q=1}^{N_t} \vert h_{jq}\vert^2$  is the squared Frobenius norm of the intended channel matrix. Hence,  $\| \textbf{H}_o\|^{2}_{\text{F}} \sim \frac{1}{2} \chi^2 \left(2 N_s N_r \right)$ \cite{paulraj_book}.
Moreover, the aggregate interfering signals are expressed as
\small{\begin{align}
\textbf{i}_{\text{agg}}
= \displaystyle \sum_{r_i \in \tilde{\Psi}^o} \sqrt{P}r_i^{-\frac{\eta}{2}} \textbf{H}_{i,\text{eff}}\:  \tilde{\textbf{s}}_i,
\end{align}}\normalsize
\noindent such that $\textbf{H}_{i,\text{eff}}$ is defined similar to \eqref{eq:dispersion_matrix}. For detection, we equalize the effective channel matrix at the receiver side by  $\textbf{W}_o$. Hence, the received vector $\boldsymbol{\tilde{\mathcal{Y}}}$ is written as
\small{\begin{align}
\!\!\!\!\boldsymbol{\tilde{\mathcal{Y}}}&=\textbf{W}_o \boldsymbol{\mathcal{Y}} =  \sqrt{P}r_o^{-\frac{\eta}{2}}\| \textbf{H}_o\|_{\text{F}} \: \textbf{s} +\sum_{r_i \in \tilde{\Psi}^o} \sqrt{P}r_i^{-\frac{\eta}{2}}   \textbf{A}_i \tilde{\textbf{s}}_i +  \textbf{w},
\end{align}
 \normalsize 
\noindent such that $\textbf{w}=\textbf{W}_o  \textbf{n}$ and  $\textbf{A}_i=\textbf{W}_o  \textbf{H}_{i,\text{eff}}$ with elements $a^{(i)}_{l,k}$ as defined in \eqref{eq:mimo_rxed_signal}.
Without loss of generality, let us consider the detection of the ${l^{th}}$ arbitrary symbol from the received vector $\boldsymbol{\tilde{\mathcal{Y}}}$. Due to the adopted Gaussian signaling scheme, we lump interfernce with noise, and thus it is essential to obtain the interference variance. First, let us define $\boldsymbol{q}_k$ as the $k^{th}$ column of the matrix  $\textbf{H}_{\text{eff}}$, similarly, $\boldsymbol{q}_{i,k}$ is the $k^{th}$ column of the matrix $\textbf{H}_{i,\text{eff}} $. Then, the received interference variance for the $l^{th}$ symbol denoted as $\mathcal{I}_l$, computed with respect to the interfering symbols $\tilde{\textbf{s}}_i$ can be derived as
\small{
\begin{align}
\!\!\!\mathcal{I}_l&=\text{Var}_{\textbf{s}_{i}}\left[ \sum_{r_i \in \tilde{\Psi}^o} \sum_{k=1}^{N_s} \sqrt{P}r_i^{-\frac{\eta}{2}} \frac{\boldsymbol{q}_l^{H} \boldsymbol{q}_{i,k}}{\| \textbf{H}_o\|_{\textbf{F}}} s_{i,k}  \right]= \sum_{r_i \in \tilde{\Psi}^o}   \sum_{k=1}^{ N_s}  {P}r_i^{-\eta}  \frac{\big \vert\boldsymbol{q}_l^{H} \boldsymbol{q}_{i,k}\big \vert^2  }{\| \textbf{H}_o\|^2_{\textbf{F}}},
\end{align}
} \normalsize
\noindent where the summation is over the $N_s$ active antennas per transmission. Note that, conditioned on $\left\{\breve{\textbf{{H}}}\right\}=\left\{\textbf{H}_o\right\}$, $a^{(i)}_{l,k} =\frac{\boldsymbol{q}_l^{H} \boldsymbol{q}_{i,k}}{\| \textbf{H}_o\|_{\textbf{F}}}$ is a normalized and independently weighted sum of complex Gaussian random variables, thus $a^{(i)}_{l,k} \sim \mathcal{CN} \left(0,1\right)$. Although a post-processor is applied, the noise power is maintained to be $\mathcal{N}_o$.
Thus, $\tilde{g}_i \sim \left(m_i,\Omega_i \right)$ with $m_i=N_s$ and $\Omega_i=1$.
Similarly, the received signal power is found to be 
\vspace{-0.3cm}
\begin{align}
\mathcal{S}&= \text{Var}_{\textbf{s}}\left[ \sqrt{P}r_o^{-\frac{\eta}{2}} \| \textbf{H}_o\|_{\text{F}}\right]= P r_o^{-\eta} \| \textbf{H}_o\|^{2}_{\text{F}},
\end{align}
where $\tilde{g}_o \sim \text{Gamma}\left(m_o, \Omega_o \right)$ with $m_o= N_s N_r$ and $\Omega_o=1$. 



\subsection{Proof of Lemma~\ref{lemma:ZF_Rx}}
\label{app:ZF_Rx}
Without loss of generality, we focus on the detection of an arbitrary symbol $l$ from the received vector $\tilde{\textbf{y}}=\textbf{W}_o \textbf{y}$, given by
\begin{align}
\tilde{y}_l= \sqrt{P} r_o^{-\frac{\eta}{2}}  \:  {s}_{l}  +\sum_{r_i \in \tilde{\Psi}^o} \sqrt{P} r_i^{-\frac{\eta}{2}} \bar{\textbf{w}}^T_{o,l} \textbf{H}_i \tilde{\textbf{s}}_i   +\bar{\textbf{w}}^T_{o,l} \textbf{n},
\end{align}

\noindent which is similar to \eqref{eq:mimo_rxed_signal}. 
First we need to to obtain the received noise variance since a post-processing matrix is applied and thus the noise variance is scaled. Conditioned on $\textbf{H}_o$, the received noise power is defined as
\begin{align}
\text{Var}_{\textbf{n} }\left[ \bar{\textbf{w}}^T_{o,l} \textbf{n}  \right]&=  \bar{\textbf{w}}^T_{o,l}\mathbb{E}\left[ \textbf{n}  \textbf{n}^H  \right] \bar{\textbf{w}}^*_{o,l} = \mathcal{N}_o  \left(\textbf{W}_o \textbf{W}_o^H\right)_{ ll} = \mathcal{N}_o  \left(\textbf{H}_o \textbf{H}_o^H\right)^{-1}_{ ll}.
\end{align}
Then, the scaling random variable is $\epsilon= \left(\textbf{H}_o \textbf{H}_o^H\right)^{-1}_{ ll}$. Next, we obtain the effective interference variance from the $l^{th}$ received symbol as
\small{
\begin{align}
\!\!\! \!\!\! \mathcal{I}_l&= \frac{1}{\epsilon}\text{Var}_{\tilde{\textbf{s}}_i}\left[   \sum_{r_i \in \tilde{\Psi}^o} \sqrt{P} r_i^{-\frac{\eta}{2}} \bar{\textbf{w}}^T_{o,l} \textbf{H}_i \tilde{\textbf{s}}_i \right]
= \frac{1}{\epsilon} \sum_{r_i \in \tilde{\Psi}^o} P r_i^{-\eta}  \left(\textbf{W}_o \textbf{W}_o^H\right)_{ ll}  \left(\textbf{H}_i \textbf{H}_i^H\right)_{ ll}
=  \sum_{r_i \in \tilde{\Psi}^o} P r_i^{-\eta} \left(\textbf{H}_i \textbf{H}_i^H\right)_{ ll}.
\end{align}
 \normalsize
In this case, the processing resulting interference channel set $\left\{\breve{\textbf{{H}}}\right\}=\emptyset$. Therefore, $a^{(i)}_{l,k} = \left(\textbf{H}_i \textbf{H}_i^H\right)_{ ll}$ and $\tilde{g}_i \sim \left( m_i,\Omega_i\right)$, with $m_i=N_t$ and $\Omega_i=1$. The received signal power is similarly computed as
\small{ \begin{align}
\!\!\! \!\!\! \mathcal{S}=\frac{1}{\epsilon}\text{Var}_{\textbf{s}}\left[   \sqrt{P} r_o^{-\frac{\eta}{2}} {s}_l \right] = \frac{  P r_o^{-\eta} }{ \left(\textbf{H}_o \textbf{H}_o^H\right)^{-1}_{ ll} }.
\end{align}} \normalsize

\vspace{-0.3cm}
\noindent Since $g_w=\left(\textbf{H}_o \textbf{H}_o^H\right)^{-1}_{ ll} \sim \text{Inv-Gamma}\left(N_r-N_t+1,1\right)$ \cite{paulraj_book}. 
Then, we can let $\frac{1}{g_w}=\tilde{g}_o \sim \text{Gamma} \left( m_o, \Omega_o\right)$, where $m_o=N_r-N_t+1$ and $\Omega_o=1$.

\subsection{Proof of Lemma~\ref{lemma:SDMA}}
\label{app:SDMA}
In a multi-user MIMO setting, we introduce a slight abuse of notation for the intended and interfering channel matrices such that they are of dimensions $\mathcal{K} \times N_t$. 
The received interference power at user $l$ where $1\leq  l \leq \mathcal{K}$, averaged over the interfering symbols $\boldsymbol{\tilde{s}_i}$ is given by
\small{
\begin{align}
\mathcal{I}_{ l}&=\text{Var}_{\boldsymbol{\tilde{s}_i}}\left[   \sum_{r_i \in \tilde{\Psi}^o} \sqrt{P} r_i^{-\frac{\eta}{2}} \textbf{h}_{i,  l} \textbf{V}_i \boldsymbol{\tilde{s}_i} \right] 
= \sum_{r_i \in \tilde{\Psi}^o} P r_i^{-\eta} \| \textbf{h}_{i,  l}  \textbf{V}_i \|^2, 
\end{align}
} \normalsize
\noindent where $\textbf{h}_{i,l}$ is the $l^{th}$ row of $\textbf{H}_i$ and $\left\{\breve{\textbf{{H}}}\right\}=\left\{{\textbf{V}}_i \right\}$. Also, $\| \textbf{h}_{i,  l}  \textbf{V}_i \|^2 = \sum_{l=1}^{\mathcal{K}} \vert a^{(i)}_{l,k}  \vert^2 $. However, the column vectors of $\textbf{V}_i$  are \emph{not} independent. Therefore, conditioned on $\textbf{v}_{i, l}$, the linear combination $ \sum_{ l=1}^{\mathcal{K}}\big \vert a^{(i)}_{l,k} \vert^2$ does not follow a Gamma distribution. Nevertheless, for tractability we \emph{approximate} this summation by a Gamma distribution. Thus, $\tilde{g}_i \sim \text{Gamma}\left( m_i, \Omega_i \right)$, where $m_i=\mathcal{K}$ and $\Omega_i=1$ by assuming such independence. This renders the aggregate interference power distribution at user $ l$ an \emph{approximation}. Similarly, the useful signal power at user $ l$ is straightforward to be obtained, after appropriate diagonalization, as $\mathcal{S}=\frac{1}{\| \textbf{v}_{ l} \|^2} \sim {\text{Gamma}\left(m_o,\Omega_o\right)}$, with $m_o= N_t - \mathcal{K}+1$ and $\Omega_o=1$ \cite{mimo_comm_book}. This can also be interpreted as having the precoding matrix nulling out $\mathcal{K}-1$ directions out of the $N_t$ subspace at the transmitter side. 

\subsection{Proof of Lemma~\ref{lemma:SM}}
\label{app:SM}

\noindent Since, there are $N_t$ distinct multiplexed symbols to be transmitted, we will study the pairwise error probability (PEP) of two distinct transmitted codewords, denoted as $\mathcal{P}=\mathbb{P}\left\{ \tilde{\textbf{s}}=\textbf{s}_1 \vert \textbf{s}_o\right\}$.
For ease of notaion, let $\delta=\sqrt{P}r_o^{-\frac{\eta}{2}}$. Therefore,
\begin{equation}
 \| \textbf{y}- \delta \textbf{H}_o  \textbf{s}_o  \|^{2} \:\:\:\:  \substack{ \textbf{s}_1\\ > \\ < \\ \textbf{s}_o}   \:\:\:\:  \| \textbf{y}- \delta \textbf{H}_o  \textbf{s}_1  \|^{2}.
\label{PEP_def}
\end{equation}

\noindent Using some mathematical manipulations, and assuming $\textbf{s}_{o}$ was the actual transmitted symbols, it can be shown that 
\small{
\begin{align}
\mathcal{P}\left(\textbf{e}\right)=\mathbb{P}\left\{
  \left[ \textbf{I}_{\text{agg}}^H+ \textbf{n}^H\right] \textbf{H}_o \textbf{e}+ \textbf{e}^H \textbf{H}_o^H \left[  \textbf{I}_{\text{agg}}+ \textbf{n}\right]  >   \delta
  \textbf{e}^H \textbf{H}_o^H  \textbf{H}_o \textbf{e}\right\},
\end{align}
}\normalsize 
\noindent where $ \textbf{I}_{\text{agg}}=\sum_{r_i \in \tilde{\Psi}^o} I_i$.
\noindent Conditioned on the channel matrices $\textbf{H}_o$ and $\textbf{H}_{i}$, and considering the Gaussian signaling approximation, the L.H.S of the above inequality represents the interference-plus-noise power and is a Gaussian random variable, denoted as $\mathcal{V}$ with zero-mean and variance $\sigma^2_{\mathcal{V}}$, thus, $\small{\mathcal{P}\left( \textbf{e}\right)= \frac{1}{2}  \text{erfc}\left( \frac{\delta \: \textbf{e}^H \textbf{H}_o^H \textbf{H}_o\textbf{e} }{\sqrt{2 \sigma^2_{\mathcal{V}}}}  \right),}\normalsize$  where the variance $\sigma^2_{\mathcal{V}}$ is given by
\small{
\begin{align}
\sigma^2_{\mathcal{V}}&= 2  \left[ \:\textbf{e}^H \textbf{H}_o^H \textbf{H}_o \textbf{e}\right] \left(  \mathcal{N}_o+  \sum_{r_i \in \tilde{\Psi}^{o}}  P r_i^{-\eta} \sum_{k=1}^{N_t}  \frac{ \vert \left(\textbf{H}_o\textbf{e}\right)^H \textbf{h}_{i,k} \vert^2}{\| \textbf{H}_o\textbf{e}\|^2_2} \right) .
\end{align}
}\normalsize
By following the same convention used in this paper, it is clear that the interference power is represented as
\vspace{-0.5cm}
\small{
\begin{align}
 \mathcal{I}=\sum_{r_i \in \tilde{\Psi}^{o}}  P r_i^{-\eta} \sum_{k=1}^{N_t}  \frac{ \vert \left(\textbf{H}_o\textbf{e}\right)^H \textbf{h}_{i,k} \vert^2}{\| \textbf{H}_o\textbf{e}\|^2}, 
\end{align}
}\normalsize
and thus we see that $\left\{\breve{\textbf{{H}}}\right\}=\left\{\textbf{H}_o \right\}$ and $a^{(i)}_{l,k} = \frac{ \left(\textbf{H}_o\textbf{e}\right)^H \textbf{h}_{i,k} }{\| \textbf{H}_o\textbf{e}\|}$. Conditioned on $\textbf{H}_o$, $\tilde{g}_i \sim \text{Gamma}\left(N_t,1 \right)$. 
 Furthermore,  let $\mathcal{S} =\textbf{e}^H \textbf{H}_o^H \textbf{H}_o \textbf{e}\overset{d}{=}\| \textbf{e}\|^2 \left( \textbf{H}_o^H \textbf{H}_o\right)_{ll}$, hence it  is starightforward to see that $\left( \textbf{H}_o^H \textbf{H}_o\right)_{ll}$ has a $\chi^2\left(N_r\right)$ distribution. Thus, $\tilde{g}_o = \left( \textbf{H}_o^H \textbf{H}_o\right)_{ll} \sim  \text{Gamma} \left(m_o,\Omega_o \right)$, with $m_o=N_r$ and $\Omega_o=1$. Then, the conditional pairwise error probability is expressed by
\footnotesize{
\begin{align}
\mathcal{P}\left(\textbf{e}\right)&=\frac{1}{2}  \text{erfc}\left( \sqrt{\frac{P r_o^{-\eta} \:  \|\textbf{e}\|^2  \tilde{g}_o }{4 \:  \left( \mathcal{N}_o+   \sum_{r_i \in \tilde{\Psi}^o}P r_i^{-\eta} \tilde{g}_i    \right) }}\right).
\end{align}
}\normalsize

\bibliographystyle{IEEEbib}
\bibliography{IEEEabrv,MyLib}

\end{document}